\newtheorem{lem}{Lemma}
\newtheorem{prop}{Proposition}
\newtheorem{thm}{Theorem}
\newtheorem{cor}{Corollary}
\newtheorem{ass}{Assumption}
\newtheorem{defi}{Definition}
\newtheorem{nota}{Notation}
\newtheorem{rem}{Remark}
\begin{document}
%\mainmatter              % start of a contribution
%
\title{Notes on Backward Stochastic Differential Equations for Computing XVA}
%
%\titlerunning{Notes on BSDEs for Computing XVA}
  % abbreviated title (for running head)
%                                     also used for the TOC unless
%                                     \toctitle is used
%
\author{Jun Sekine\footnote{
	Graduate School of Engineering Science,
	Osaka University,
	1-3, Machikaneyama-cho, Toyonaka,
	Osaka, 560-8531, Japan 
	Email: \texttt{sekine@sigmath.es.osaka-u.ac.jp}
}
	\footnote{
Jun Sekine's research is supported by 
a Grant-in-Aid for Scientific Research (C), No. 19K03636, 
from the Japan Society for the Promotion of Science.} 
\quad and \quad
Akihiro Tanaka \footnote{
	Graduate School of Engineering Science,
	Osaka University,
	1-3, Machikaneyama-cho, Toyonaka,
	Osaka, 560-8531, Japan / 
	Sumitomo Mitsui Banking Corporation,
	1-1-2, Marunouchi, Chiyoda-ku, 
	Tokyo, 100-0005, Japan,
	Email: \texttt{tnkaki2000@gmail.com}
}
}
\date{}
%\institute{Graduate School of Engineering Science, Osaka University, 
%Machikaneyama 1-3, Toyonaka, Osaka 560-8531, Japan, \\
%\email{sekine@sigmath.es.osaka-u.ac.jp},\\ 
%WWW home page:
%\texttt{http://users/\homedir iekeland/web/welcome.html}
%\and
%Sumitomo Mitsui Banking Corporation,  1-1-2, Marunouchi, Chiyoda-ku, 
%Tokyo, 100-0005, Japan, \\
%\email{atanaka@sigmath.es.osaka-u.ac.jp}
%\email{tnkaki2000@gmail.com}}

\maketitle              % typeset the title of the contribution

\begin{abstract}
The X-valuation adjustment (XVA) problem, 
which is a recent topic in mathematical finance, is 
considered and analyzed. 
First, the basic properties of backward stochastic differential equations (BSDEs)
with a random horizon in a progressively enlarged filtration 
are reviewed. 
Next, the pricing/hedging problem for defaultable over-the-counter (OTC)
derivative securities is described using such BSDEs.
An explicit sufficient condition is given
to ensure the non-existence of an arbitrage opportunity
for both the seller and buyer of the derivative securities.
Furthermore, an explicit pricing formula is presented in which 
XVA is interpreted as approximated correction terms 
of the theoretical fair price.\\
\textbf{Keywords}:
BSDE, XVA, derivative pricing, defaultable security, arbitrage-free price
\end{abstract}
\section{Introduction}

Backward stochastic differential equations 
(BSDEs) have been studied intensively 
from both theoretical and application viewpoints.
Bismut (1976, 1978) studied BSDEs related to stochastic control problems, 
and Pardoux and Peng (1990) introduced general {\it nonlinear} 
BSDEs driven by Brownian motion as a noise process.
After those early pioneering studies and since the late 1990s, 
%topics in mathematical finance have been one of major important 
%motivations to develop BSDE theory and applications.
the field of mathematical finance has provided various interesting 
research topics to develop the theory and application of BSDEs
(e.g., El Karoui et al., 2000).
In the present paper, 
we are interested in one such recent research topic 
in mathematical finance, 
namely, the X-valuation adjustment (XVA) problem.
%As widely recognized in quantitative finance community 
%(of researchers and practitioners), 
The pricing and hedging methodology for over-the-counter (OTC) 
financial derivative securities 
for practitioners in financial institutions
has been modified since the global financial crisis in 2008.
The pre-crisis pricing was based on the Black--Scholes--Merton 
paradigm, and 
\[
 {\rm p}_{\rm RN}:= {\mathbb E}\left[ {\rm DF}_r(T) \xi_T\right]
\]
was regarded as the ``fair'' price 
of the derivative security $(T,\xi_T)$.
Here, $\xi_T$ is a random variable representing the 
payoff at the maturity date $T\in {\mathbb R}_{++}(:=(0,\infty))$
of the derivative security, 
${\rm DF}_r(T):=
\exp\left\{ -\int_0^T r(u)du\right\}$ is a suitable discounting factor,
where $r:=(r(t))_{t\ge 0}$ is a risk-free interest rate process, 
and ${\mathbb E}\left[ (\cdot)\right]$ 
represents the expectation with respect to the so-called 
risk-neutral probability measure. 
By contrast, the post-crisis pricing formula 
used by practitioners in financial institutions
is now described as
\begin{equation}
 \bar{\rm p}_{\rm RN} + \sum_{x} x{\rm VA} 
\end{equation}
for the derivative security $(T,\xi_T)$. Here, 
\[
 \bar{\rm p}_{\rm RN}:= {\mathbb E}\left[ 
{\rm DF}_{\bar{r}}(T) \xi_T\right], 
\]
employing $\bar{r}:=(\bar{r}(t))_{t\ge 0}$ 
as a risk-free interest rate process,
which is different from $r$ used in the pre-crisis
model,\footnote{
The London Interbank Offered Rate (LIBOR) was a popular choice 
as the risk-free rate in pre-crisis models, 
whereas the Overnight Index Swap (OIS) rate is now recognized as 
a suitable candidate as the risk-free rate 
in post-crisis models.} 
and 
\[
\sum_{x} x{\rm VA}=
{\rm CVA} - {\rm DVA} + {\rm FVA} + {\rm ColVA} + \cdots
\]
represents various valuation adjustments
(e.g., credit valuation adjustment, debt valuation adjustment, 
funding valuation adjustment, collateral valuation adjustment).
%each of which are written in certain expectation forms.
We may interpret the post-crisis modification as reflecting 
the following current situations.
\begin{itemize}
 \item[(a)] The credit risk (default risk) of investors and their
	    counterparties and the liquidity risk (of assets and cash)
are widely recognized and and now considered seriously.
 \item[(b)] As a consequence of (a), the differences in various interest rates 
(e.g., risk-free rate, repo rate, funding rate, collateral rate) 
can no longer be neglected.
%which are caused by credit/liquidity risk of investors.
\end{itemize}
In this paper, we aim to understand
the post-crisis pricing formula (1) in a better way
from a theoretical viewpoint. 
Using BSDEs, which model the value processes of hedging portfolios, 
we interpret (1) 
as an approximate value of the fair price (i.e., the replication cost) of 
a derivative security.
Concretely, this paper is organized as follows.
\begin{itemize}
 \item
In Section~2, we prepare a BSDE with a random horizon, 
where two random times $\tau_1,\tau_2$ 
and the progressively enlarged filtration by these random times
are introduced, and 
the horizon is set as $\tau_1\wedge \tau_2 \wedge T$ ($T\in {\mathbb R}_{++}$).
We review some basic properties of such a BSDE, 
that is, the existence of a unique solution  and its construction,  
using a reduced BSDE defined on a smaller filtration
(see Theorems~1--3).
These results are then used in Section~3.
 \item
In Section~3, we construct a financial market model 
that generalizes the model given by Bichuch et al.\ (2018). 
On it, we derive BSDEs for pricing and hedging derivative securities,
which express {\it nonlinear} dynamic hedging portfolio values 
of the seller and buyer.
Here, we model the default time of the hedger 
(i.e., the seller of a derivative security) $\tau_1$
and that of her counterparty 
(i.e., the buyer of the derivative security) $\tau_2$, 
each of which are defined by random times. 
The contract between the hedger and her counterparty expires
if the hedger or the counterparty defaults. Hence, 
$\tau_1 \wedge \tau_2 \wedge T$ is 
interpreted as the (random) horizon of the contract, 
where $T$ is the prescribed fixed maturity, 
and we naturally have BSDEs considered in Section~2.

 \item 
In Section~4, working with the BSDEs introduced in Section~3, 
we obtain the following. 
\begin{itemize}
 \item[(i)] 
An explicit sufficient condition
is presented to ensure the non-existence
of an arbitrage opportunity for both the seller and buyer of the derivative security
(see Theorem~4).
We note that a rather restrictive condition is necessary
to ensure the existence of an arbitrage-free price
(see Remark~14). 
 \item[(ii)]
The pricing formula (1) used by practitioners
is interpreted as an approximation of the theoretical fair price 
of the derivative security: 
XVA is regarded as certain ``zero-th'' order approximated correction terms.
(see Theorem~5, Corollary~1, Proposition~3, and Remark~16).
Furthermore, we mention a higher first-order approximation 
(see Subsection~4.3).
\end{itemize}
\end{itemize}
%This paper is generally written in and expository manner: 
We intend to write this paper in an expository manner generally:
Section~2 is devoted for reviewing known results
and some results in Section~4 
(that is, Theorem~4 and Proposition~1 and 2)
are rather straightforward extensions of existing results
of the closely related work by 
Bichuch et al.\ (2015, 2018) and Tanaka (2019).
For other parts, we regard the following as being the contributions
of the paper in comparison with 
Bichuch et al.\ (2015, 2018) and Tanaka (2019).
\begin{itemize}
 \item[1)] The market model is generalized: our model treats 
\begin{itemize}
 \item[(i)] a multiple risky asset model, and 
 \item[(ii)] a stochastic factor model that includes 
a stochastic volatility, a stochastic interest rate, 
and a stochastic hazard rate.
\end{itemize}
 \item[2)] Different definitions of 
arbitrages and admissible trading strategies are employed
(see Subsection~3.5).
Because we analyze the pricing/hedging problem of derivative securities
by using BSDEs, our choices seem to be natural and clear. 
 \item[3)] For XVA, an interpretation of pricing formula (1)
is given as well as its arbitrage-free property
(see Theorem~5, Corollary~1, and Proposition~3 with 
the following Remark~16 in Subsection~4.2, and cf.\ the results in \cite{T}).
 \item[4)] 
Regarding the lending-borrowing spreads of interest rates as 
``small parameters'', 
the first order perturbed BSDEs are derived 
and the associated approximated valuation adjustment terms 
are computed (see Proposition~4 in Subsection~4.3).
\end{itemize}
%We think that this paper is rather expository than 
%seeking original novel results. However, 
%the above (i) and (ii) 
%seem to contain new meaningful results.
%in the existing literature) 
%For the details, see Remark ***. 

\section{BSDE with a Random Horizon in a Progressively Enlarged Filtration}

\subsection{Setup}
%Let $T\in {\mathbb R}_{++}:=(0,\infty)$ be a time horizon. 
Let $(\Omega, {\mathcal F},{\mathbb P})$
be a complete probability space and 
let $W:=\left(W(t)\right)_{t\ge 0}$, 
$W(t):=\left(W_1(t),\dots,W_n(t)\right)^\top$ 
be an $n$-dimensional Brownian motion on it.
Define the filtration by
\[
 {\mathcal F}_t :=\sigma\left(W(s); s\in [0,t]\right) \vee {\mathcal N},
\quad t\ge 0,
\]
where ${\mathcal N}$ is the totality of null sets. 
Let $E_1,E_2$ be exponentially distributed random variables, 
assuming that $W$, $E_1$, and $E_2$ are mutually independent. 
Using nonnegative ${\mathcal F}_t$-progressively measurable processes
$h_i:=\left(h_i(t)\right)_{t\ge 0}$, ($i=1,2$), 
define the random times $\tau_1, \tau_2$ by
\begin{equation}
 \tau_i:=\inf\left\{ t\ge 0 \Bigm| \int_0^t h_i(u) du \ge E_i \right\}.
\end{equation}
The indicator processes for $\tau_i$ ($i=1,2$), namely 
\[
 N_i(t):=1_{\{ t\ge \tau_i\}},
\quad t\ge 0, 
\]
are submartingales with respect to the filtration
\[
 {\mathcal H}_t:=\sigma\left( N_1(s), N_2(s) ; \ s\in [0,t]\right),
\quad t\ge 0, 
\]
and their Doob--Meyer decompositions are written as
\[
 N_i(t) =M_i(t) + 
\int_0^t \left\{ 1-N_i(s)\right\} h_i(s) ds,
\quad t\ge 0
\]
for $i=1,2$, 
where 
\[
M_i(t):= N_i(t) -\int_0^t \left\{ 1-N_i(s)\right\} h_i(s) ds,
\quad t\ge 0
\]
($i=1,2$)
are two independent martingales with respect to $({\mathcal H}_t)_{t\ge 0}$.
Moreover, $(W,M_1,M_2)$ remain as martingales
with respect to the progressively enlarged filtration, 
\[
 {\mathcal G}_t:={\mathcal F}_t \vee {\mathcal H}_t,
\quad t\ge 0
\]
(e.g., see Section~2.3 of ~Aksamit and Jeanblanc, 2017), 
which are mutually independent.  
Also, we deduce that for $0\le s\le t$, 
\[
 {\mathbb P}\left( \tau_i> s \bigm| {\mathcal F}_t\right)
={\mathbb P}\left( \tau_i> s \bigm| {\mathcal F}_\infty\right)
=\exp\left\{ -\int_0^s h_i(u)du\right\},
\]
where $\displaystyle
{\mathcal F}_\infty:=\sigma\left( \cup_{t\ge 0}{\mathcal F}_t\right)$.
From this, we see that for $ds\ll 1$, 
\begin{align*}
{\mathbb P}\left( \tau_i\le s+ds \bigm|\tau_i>s, {\mathcal F}_\infty \right)
=&
\frac{{\mathbb P}\left( s<\tau_i\le s+ds
| {\mathcal F}_\infty\right)}
{{\mathbb P}\left( \tau_i>s 
| {\mathcal F}_\infty\right)} \\
=&1-\exp\left\{ -\int_s^{s+ds} h_i(u)du\right\}
\approx
h_i(s)ds,
\end{align*}
and $h_i$ is called the hazard rate (or intensity) process for $\tau_i$.
%Such an invariance property of martingales $M^i$ ($i=1,2$) 
%under an enlargement of filtration is 
%sometimes called the Hypothesis (H), or the immersion property
%(see Jeanblanc, or Song, for example). 
Following Pham (2010), we employ the notation below.
\begin{nota}
\begin{itemize}
 \item ${\mathbb F}:=({\mathcal F}_t)_{t\ge 0}$, 
${\mathbb G}:=({\mathcal G}_t)_{t\ge 0}$, and
${\mathbb H}:=({\mathcal H}_t)_{t\ge 0}$.
 \item ${\mathcal P}({\mathbb F})$
(resp.\ ${\mathcal P}({\mathbb G})$): 
$\sigma$-algebra generated by 
${\mathbb F}$ (resp.\ ${\mathbb G}$)-predictable measurable subsets
on ${\mathbb R}_+ \times \Omega$.
Equivalently, $\sigma$-algebra on ${\mathbb R}_+ \times \Omega$
generated by ${\mathbb F}$-adapted left-continuous processes.  
 \item ${\mathcal O}({\mathbb F})$
(resp.\ ${\mathcal O}({\mathbb G})$):
$\sigma$-algebra generated by 
${\mathbb F}$ (resp.\ ${\mathbb G}$)-optional measurable subsets
on ${\mathbb R}_+ \times \Omega$.
Equivalently, $\sigma$-algebra on ${\mathbb R}_+ \times \Omega$
generated by ${\mathbb F}$-adapted right-continuous processes.  
 \item $\displaystyle {\mathcal P}_{{\mathbb F}}$
(resp.\ $\displaystyle {\mathcal P}_{{\mathbb G}}$): 
the space of ${\mathbb F}$ (resp.\ ${\mathbb G}$)-predictable 
processes. 
 \item $\displaystyle {\mathcal O}_{{\mathbb F}}$
(resp.\ $\displaystyle {\mathcal O}_{{\mathbb G}}$): 
the space of ${\mathbb F}$ (resp.\ ${\mathbb G}$)-optional 
processes. 
% \item $\Delta_k:=\left\{ u:=(u_1,\dots,u_k)\in {\mathbb R}_+^k
%\bigm| u_1\le \dots \le u_k\right\}$.
 \item ${\mathcal P}_{{\mathbb F}}^{(k)}$:
the space of the parametrized processes, 
$f: {\mathbb R}_+\times \Omega\times {\mathbb R}_{+}^k
\ni (t,\omega,u)\mapsto 
f_t(\omega,u)\in {\mathbb R}$, 
which is ${\mathcal P}({\mathbb F})\otimes {\mathcal B}({\mathbb R}_{+}^k)/
{\mathcal B}({\mathbb R})$-measurable.
 \item ${\mathcal O}_{{\mathbb F}}^{(k)}$:
the space of the parametrized processes, 
$f: {\mathbb R}_+\times \Omega\times {\mathbb R}_{+}^k
\ni (t,\omega,u)\mapsto 
f_t(\omega,u)\in {\mathbb R}$, 
which is ${\mathcal O}({\mathbb F})\otimes {\mathcal B}({\mathbb R}_{+}^k)/
{\mathcal B}({\mathbb R})$-measurable.
 \item 
Denote by 
$\displaystyle {\mathcal P}_{{\mathbb F},t}:=
\left\{ f 1_{[0,t]} | \ f\in {\mathcal P}_{\mathbb F}\right\}$, 
$\displaystyle {\mathcal O}_{{\mathbb F},t}:=
\left\{ f 1_{[0,t]} | \ f\in {\mathcal O}_{\mathbb F}\right\}$, 
$\displaystyle {\mathcal P}^{(k)}_{{\mathbb F},t}:=
\left\{ f(\cdot) 1_{[0,t]} | \ f\in {\mathcal P}^{(k)}_{\mathbb F}\right\}$, 
and $\displaystyle {\mathcal O}^{(k)}_{{\mathbb F},t}:=
\left\{ f(\cdot) 1_{[0,t]} | \ f\in {\mathcal O}^{(k)}_{\mathbb F}\right\}$, 
for example. 
\end{itemize}
\end{nota}
We recall the following basic properties 
of stochastic processes under 
the progressively enlarged filtration ${\mathbb G}$.
\begin{lem}[Lemmas 5.1 and 2.1 of Pham, 2010]
\quad
\begin{itemize}
 \item[{\rm (1)}]
Any ${\mathcal G}_t$-predictable process $\left(P(t)\right)_{t\ge 0}$
has the expression that
\begin{multline*}
 P(t) = 
p_0(t) 1_{\{t\le \tau_1\wedge\tau_2\}} 
+ p^1_t(\tau_1) 1_{\{\tau_1< t\le \tau_2\}}
+ p^2_t(\tau_2) 1_{\{\tau_2< t\le \tau_1\}}
+ p^{1,2}_t(\tau_1,\tau_2) 1_{\{t> \tau_1\vee\tau_2\}},
\end{multline*}
where $\left(p_0(t)\right)_{t\ge 0}\in {\mathcal P}_{\mathbb F}$,  
$\left(p^i_t(\cdot) \right)_{t\ge 0}\in {\mathcal P}_{\mathbb F}^{(1)}$
($i=1,2$)
and 
$\left(p^{1,2}_t(\cdot,\cdot) \right)_{t\ge 0}\in 
{\mathcal P}_{\mathbb F}^{(2)}$.
 \item[{\rm (2)}]
Any ${\mathcal G}_t$-optional process $\left(P(t)\right)_{t\ge 0}$
has the expression that
\begin{multline*}
 P(t) = 
p_0(t) 1_{\{t< \tau_1\wedge\tau_2\}} 
+ p^1_t(\tau_1) 1_{\{\tau_1\le t<\tau_2\}}
+ p^2_t(\tau_2) 1_{\{\tau_2\le t<\tau_1\}}
+ p^{1,2}_t(\tau_1,\tau_2) 1_{\{t\ge \tau_1\vee\tau_2\}},
\end{multline*}
where $\left(p_0(t)\right)_{t\ge 0}\in {\mathcal O}_{\mathbb F}$, 
$\left(p^i_t(\cdot) \right)_{t\ge 0}\in {\mathcal O}_{\mathbb F}^{(1)}$
($i=1,2$)
and 
$\left(p^{1,2}_t(\cdot,\cdot) \right)_{t\ge 0}\in 
{\mathcal O}_{\mathbb F}^{(2)}$.

 \item[{\rm (3)}]
Any ${\mathcal G}_t$-measurable random variable $G_t$
has the expression that
\begin{multline*}
G_t = 
g^0_t 1_{\{t< \tau_1\wedge\tau_2\}} 
+ g^1_t(\tau_1) 1_{\{\tau_1\le t<\tau_2\}}
+ g^2_t(\tau_2) 1_{\{\tau_2\le t<\tau_1\}}
+ g^{1,2}_t(\tau_1,\tau_2) 1_{\{t\ge \tau_1\vee\tau_2\}},
\end{multline*}
where $g^0_t$ is an ${\mathcal F}_t$-measurable random variable, 
$\left( g^i_t(\cdot) \right)_{t\ge 0}
\in {\mathcal O}_{\mathbb F}^{(1)}$ ($i=1,2$), and
$\left( g^{1,2}_t(\cdot,\cdot)\right)_{t\ge 0}
\in {\mathcal O}_{\mathbb F}^{(2)}$.
\end{itemize}
\end{lem}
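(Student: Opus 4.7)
The plan is to prove all three parts by a functional monotone class argument, starting from part (3) which is the pointwise skeleton on which (1) and (2) rest.

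For part (3), the key observation is that the four events $\{t<\tau_1\wedge\tau_2\}$, $\{\tau_1\le t<\tau_2\}$, $\{\tau_2\le t<\tau_1\}$, $\{t\ge \tau_1\vee\tau_2\}$ are ${\mathcal H}_t$-measurable and partition $\Omega$. Since ${\mathcal G}_t={\mathcal F}_t\vee{\mathcal H}_t$, a $\pi$-system generating ${\mathcal G}_t$ is given by intersections $A\cap B$ with $A\in {\mathcal F}_t$ and $B$ running through a $\pi$-system generating ${\mathcal H}_t$ (for instance, sets of the form $\{N_i(t)=\epsilon_i,\ \tau_i\wedge t\le u_i,\ i=1,2\}$). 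On each piece of the partition, ${\mathcal H}_t$ either carries no further information (on $\{t<\tau_1\wedge\tau_2\}$) or resolves to the exact values of the default times that have already occurred, so the decomposition for the indicator $1_{A\cap B}$ holds by inspection, with $g^0_t$ being ${\mathcal F}_t$-measurable and the remaining coefficients ${\mathcal F}_t\otimes{\mathcal B}({\mathbb R}_+^k)$-measurable. The family of $G_t$ admitting the desired representation is linear, contains constants, and is closed under bounded monotone limits, so the functional monotone class theorem extends it to every ${\mathcal G}_t$-measurable $G_t$.

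For part (1), I would first check that each of the four indicators $1_{\{t\le\tau_1\wedge\tau_2\}}$, $1_{\{\tau_1<t\le\tau_2\}}$, $1_{\{\tau_2<t\le\tau_1\}}$, $1_{\{t>\tau_1\vee\tau_2\}}$ is left-continuous in $t$ and hence ${\mathbb G}$-predictable, and that the four sets partition ${\mathbb R}_+\times\Omega$. A $\pi$-system generating ${\mathcal P}({\mathbb G})$ consists of the stochastic rectangles $(s,t]\times F$ with $F\in {\mathcal G}_s$ together with $\{0\}\times F_0$ for $F_0\in{\mathcal G}_0$. Applying part (3) at time $s$ to $1_F$ and then multiplying by $1_{(s,t]}(u)$, a short case analysis on the partition at time $u$ (for example, on $\{\tau_1<u\le\tau_2\}$ either $s<\tau_1$, so the $g^0$ piece is active, or $\tau_1\le s<\tau_2$, so the $g^1$ piece is active) produces the coefficients $p_0$, $p^1_\cdot(\cdot)$, $p^2_\cdot(\cdot)$, $p^{1,2}_\cdot(\cdot,\cdot)$. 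Each lies in the required parametrized predictable class because the only $u$-dependence is through the left-continuous factor $1_{(s,t]}(u)$. A second monotone class argument, using that the set of processes admitting such a representation is a vector space closed under bounded monotone limits, extends the decomposition to all of ${\mathcal P}({\mathbb G})$. Part (2) is entirely analogous, working with right-continuous generators and the optional version of the partition ($\{\tau_1\le t<\tau_2\}$, etc.), which are ${\mathbb G}$-optional.

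The step I expect to be most delicate is verifying joint ${\mathcal P}({\mathbb F})\otimes{\mathcal B}({\mathbb R}_+^k)$-measurability of the $p^i_t(\cdot)$ and $p^{1,2}_t(\cdot,\cdot)$ that emerge from the monotone class procedure: it is not enough that they be ${\mathbb F}$-predictable for each fixed value of the default arguments. For a generator $(s,t]\times F$ the coefficients are finite sums of products of ${\mathcal F}_s$-measurable random variables (or ${\mathcal F}_s\otimes{\mathcal B}({\mathbb R}_+^k)$-measurable functions obtained from part (3)) with $1_{(s,t]}(u)$ and with indicators of $\tau_i\le s$ or $\tau_i>s$, so joint measurability follows from a standard Fubini-type argument and is preserved under the monotone class step. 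Once this bookkeeping is in place, the rest of the proof is combinatorial.
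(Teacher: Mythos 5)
This lemma is stated in the paper without proof: it is quoted verbatim from Pham (2010, Lemmas 2.1 and 5.1), so there is no in-paper argument to compare against. Your proposal is essentially the standard proof of that cited result — monotone class over the $\pi$-system $A\cap B$ with $A\in{\mathcal F}_t$, $B$ generating ${\mathcal H}_t$ (using that the trace of ${\mathcal H}_t$ on $\{t<\tau_1\wedge\tau_2\}$ is trivial), then bootstrapping from the pointwise statement (3) to the predictable and optional statements (1), (2) via the rectangle generators $(s,t]\times F$ and $[s,t)\times F$ with $F\in{\mathcal G}_s$ — and it is correct, including your handling of the genuinely delicate point, the joint ${\mathcal P}({\mathbb F})\otimes{\mathcal B}({\mathbb R}_+^k)$-measurability of the parametrized coefficients, which for the generators reduces to products of ${\mathcal F}_s$-measurable data with $1_{(s,t]}(u)$ and indicators in the parameters, and which survives the monotone-class closure by taking $\limsup$'s of coefficients. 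The only bookkeeping worth making explicit is that the rectangle family $[s,t)\times F$ does generate ${\mathcal O}({\mathbb G})$ under the paper's definition (via dyadic left-endpoint sampling of right-continuous adapted processes), which your sketch uses implicitly.
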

Now, on the filtered probability space 
$(\Omega,{\mathcal F},{\mathbb P}, {\mathbb G})$, 
we consider the BSDE 
%with random time horizon $\tau_1\wedge \tau_2\wedge T$,
\begin{equation}
\begin{split}
 -dY(t) =& f\left( t, Y(t), Z(t), U_1(t),U_2(t) \right)dt\\
&- Z(t)^\top dW(t) - U_1(t) dM_1(t)-U_2(t) dM_2(t), \\
&\qquad t\in [0, \tau_1 \wedge \tau_2\wedge T], \\
Y({\tau_1 \wedge \tau_2\wedge T})=& 
\phi_1(\tau_1) 1_{\{ \tau_1<\tau_2\wedge T\}}
+\phi_2(\tau_2) 1_{\{ \tau_2<\tau_1\wedge T\}}
+\xi_T 1_{\{ T<\tau_1 \wedge \tau_2\}},
\end{split}
\end{equation}
where
$T\in {\mathbb R}_{++}:=(0,\infty)$
is a fixed terminal time, and the following conditions are imposed.
\begin{ass}
\begin{itemize}
\item[\rm (i)]
$\xi_T \in L^2(\Omega , {\mathcal F}_{T}, {\mathbb P})$. 
\item[\rm (ii)]
For $i=1,2$, 
$\phi_i \in {\mathcal O}_{\mathbb F}$
so that
$\displaystyle 
{\mathbb E}\left[ \sup_{t\in [0,T]}|\phi_i(t)|^2\right]<\infty$.
\item[\rm (iii)]	
$f: [0,T]\times \Omega \times {\mathbb R} \times {\mathbb R}^n
\times {\mathbb R}^2 \to {\mathbb R}$ is 
${\mathcal P}_{{\mathbb F}}\otimes {\mathcal B}({\mathbb R})
\otimes {\mathcal B}({\mathbb R}^n)\otimes {\mathcal B}({\mathbb R}^2)
/{\mathcal B}({\mathbb R})$-measurable and satisfies, 
with some positive constant $K_f>0$, 
\begin{multline*}
\left|f \left(t,y, z, u_1, u_2\right) 
-f \left(t,y', z', u'_1, u'_2\right) \right| \\
\le K_f
\left(|y-y'|+|z-z'|+|u_1-u_1'|+|u_2-u_2'|\right) \\
\text{for all $(y, z, u_1, u_2)$, $(y',z',u_1', u'_2)$}
\end{multline*}
a.e.\ $(t,\omega)\in [0,T]\times \Omega$. 
\item[\rm (iv)]
It holds that
\[
{\mathbb E}
\left[\int_{0}^{T} \left|f(t,0,0,0,0)\right|^2
dt\right]< \infty.
\]
%\item[\rm (v)]
%It holds that
%\begin{align*}
%f\left(t,y,z,u_1,u_2\right) -f\left(t,y,z,u'_1,u_2\right) 
%&\le K_1(t) (u_1 -u_1')h^1_t, \\
%f\left(t,y,z,u_1,u_2\right) -f\left(t,y,z,u_1,u_2'\right)  
%&\le K_2(t) (u_2 -u_2')h^2_t \\
%&\text{for all $t, y, z, u_1,u_1',u_2,u_2'$}
%\end{align*} 
%a.e. $(t,\omega)$, 
%where $K_1,K_2 : [0,T]\times \Omega \to [-1,\infty)$ 
%are predictable.
\end{itemize} 
\end{ass}
\subsection{Existence, Uniqueness, and Construction of Solution}

A specific feature of BSDE (3)
is that it has the random time horizon $\tau_1\wedge \tau_2 \wedge T$, 
where $\tau_i$ is the (first) jump time for the martingale $M_i$
($i=1,2$).
As for the definition of the solution to such a BSDE, 
we employ the following
(cf.\ Darling and Pardoux, 1997 as an example of related work).
\begin{defi}
We call the quadruplet $(Y,Z,U^1,U^2): 
[0,T]\times\Omega \to {\mathbb R}\times {\mathbb R}^n \times 
{\mathbb R}\times {\mathbb R}$ a solution to BSDE (3) if it satisfies
the following conditions.
\begin{itemize}
 \item[\rm (a)] 
$Y:=(Y(t))_{t\in [0,T]}$ is a ${\mathbb G}$-adapted RCLL 
(i.e., right continuous and having left limit) process 
(which is an element of ${\mathcal O}_{{\mathbb G},T}$), and 
$(Z,U^1,U^2)\in \left( {\mathcal P}_{{\mathbb G},T}\right)^{n+2}$.
 \item[\rm (b)]
%On the set $\{ t\ge \tau_1\wedge \tau_2\}$, 
For $t\in [0,T]$, it holds that
\begin{align*}
Y(t)
1_{\{ \tau_1\wedge \tau_2 \le t\}}
%=&
%\left\{
%\phi_1({\tau_1}) 1_{\{ \tau_1<\tau_2, \tau_1 \le T\}}
%+\phi_2({\tau_2}) 1_{\{ \tau_2<\tau_1, \tau_2 \le T\}}
%+F_T 1_{\{ \tau_1 \wedge \tau_2 >T\}}
%\right\}1_{\{ \tau_1\wedge \tau_2\le t\}}, \\
=&
\left\{
\phi_1({\tau_1}) 1_{\{ \tau_1<\tau_2\}}
+\phi_2({\tau_2}) 1_{\{ \tau_2<\tau_1\}}
\right\}1_{\{ \tau_1\wedge \tau_2\le t\}}, \\
Z(t)1_{\{ \tau_1\wedge \tau_2 \le t\}}=&0, \\
U_i(t)1_{\{ \tau_1\wedge \tau_2 \le t\}}=&0, \quad i=1,2.
\end{align*}
 \item[\rm (c)] For $t\in [0,T]$, it holds that
\[
\begin{split}
Y(t)=&\phi_1({\tau_1}) 1_{\{ \tau_1<\tau_2, \tau_1 \le T\}}
+\phi_2({\tau_2}) 1_{\{ \tau_2<\tau_1, \tau_2 \le T\}}
+\xi_T 1_{\{ \tau_1 \wedge \tau_2 >T\}} \\
&+
\int_{t\wedge \tau_1\wedge\tau_2}^{T\wedge \tau_1\wedge\tau_2} 
f\left( s, Y(s), Z(s), U_1(s),U_2(s) \right)ds \\
&- \int_{t\wedge \tau_1\wedge\tau_2}^{T\wedge \tau_1\wedge\tau_2} 
\left\{
Z(s)^\top dW(s)  +U_1(s) dM_1(s)+U_2(s) dM_2(s)
\right\}.
%&- \int_{t\wedge \tau_1\wedge\tau_2}^{T\wedge \tau_1\wedge\tau_2} 
%U^1_s dM^1_s
%- \int_{t\wedge \tau_1\wedge\tau_2}^{T\wedge \tau_1\wedge\tau_2} 
%U^2_s dM^2_s. \nonumber
\end{split}
\]
\end{itemize}
\end{defi}
Furthermore, we define the following spaces of stochastic processes, namely, 
\begin{align*} 
{\mathbb S}^2_{\beta,T}:=&
\left\{ Y\in {\mathcal O}_{{\mathbb G},T} 
\bigm| \| Y\|^2_{\beta,T}<\infty
\right\}, \\
{\mathbb H}^{2,d}_{\beta,T}:=& 
\left\{ Z\in \left( {\mathcal P}_{{\mathbb G},T}\right)^d
\bigm| \|Z\|^2_{\beta,T}<\infty\right\},
%{\mathbb S}^2_T:=&
%\left\{ 
%Y\in {\mathcal O}_{{\mathbb G},T} 
%\biggm| {\mathbb E}\left[\sup_{t \in [0,T]}|Y(t)|^2\right] < \infty 
%\right\}, \\
%{\mathbb H}^{2,n}_{T}:=& 
%\left\{ 
%Z\in {\mathcal P}_{{\mathbb G},T}^n
%\biggm| 
%\|Z\|^2:={\mathbb E}\left[\int_{0}^{T} |Z(t)|^2 d t \right] < \infty 
%\right\}.
\end{align*} 
letting $\beta\in {\mathbb R}$ and denoting 
\[
\| Y\|^2_{\beta,T}:=
{\mathbb E}\left[ \int_0^T  e^{\beta t}|Y(t)|^2dt\right].
\]
%for $(Y(t))_{t\in [0,T]}$, 
We then obtain the following.
\begin{thm}
Under Assumption 1, BSDE (3) admits a  
unique solution \\ 
$(Y, Z, U_1, U_2) \in 
{\mathbb S}^2_{\beta,T} \times {\mathbb H}^{2,n+2}_{\beta,T}$
for any sufficiently large $\beta>0$.
\end{thm}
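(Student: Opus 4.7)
I would prove the theorem by a Picard contraction argument in the weighted norm space $\mathbb{S}^2_{\beta,T}\times\mathbb{H}^{2,n+2}_{\beta,T}$, adapting the classical Pardoux--Peng scheme to the progressively enlarged filtration $\mathbb{G}$. The random horizon is absorbed by stopping the driver and assembling the terminal condition so that the problem becomes an equivalent standard BSDE on the full interval $[0,T]$.

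First I would set $\tau := \tau_1\wedge\tau_2\wedge T$ and define
\[
\xi := \phi_1(\tau_1) 1_{\{\tau_1<\tau_2\wedge T\}} + \phi_2(\tau_2) 1_{\{\tau_2<\tau_1\wedge T\}} + \xi_T 1_{\{T<\tau_1\wedge\tau_2\}},
\]
which is $\mathcal{G}_\tau$-measurable and square-integrable by Assumption~1~(i)--(ii). Introducing the stopped driver $\tilde f(s,y,z,u_1,u_2) := f(s,y,z,u_1,u_2) 1_{\{s\le\tau\}}$, I would observe that conditions (a)--(c) of Definition~1 are equivalent to requiring $(Y,Z,U_1,U_2)\in\mathbb{S}^2_{\beta,T}\times\mathbb{H}^{2,n+2}_{\beta,T}$ with $Y(T)=\xi$ solving the standard-looking BSDE on $[0,T]$ driven by $\tilde f$, since multiplication by $1_{[0,\tau]}$ automatically enforces condition (b).

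Next, for every input $(y,z,u_1,u_2)\in\mathbb{S}^2_{\beta,T}\times\mathbb{H}^{2,n+2}_{\beta,T}$ I would consider the square-integrable $\mathbb{G}$-martingale
\[
N(t) := \mathbb{E}\Bigl[\xi + \int_0^\tau f(s,y(s),z(s),u_1(s),u_2(s))\,ds \,\Bigm|\, \mathcal{G}_t\Bigr],
\]
invoke the predictable representation property of $\mathbb{G}$ --- valid because $W,E_1,E_2$ are mutually independent, hence $(W,M_1,M_2)$ form a family enjoying PRP for $\mathbb{G}$ --- to get unique integrands $(Z,U_1,U_2)$, and set $Y(t) := N(t) - \int_0^{t\wedge\tau} f(s,y,z,u_1,u_2)\,ds$ (killing $Z,U_1,U_2$ after $\tau$). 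This defines a map $\Phi(y,z,u_1,u_2) := (Y,Z,U_1,U_2)$ whose fixed points are exactly the solutions of BSDE~(3). To prove $\Phi$ is a contraction for $\beta$ large, I would apply Itô's formula to $e^{\beta t}|\Delta Y(t)|^2$ for differences of outputs, take expectations so that the local-martingale parts vanish, exploit the mutual independence of $W,M_1,M_2$ to eliminate the cross brackets, and use the Lipschitz bound $K_f$ together with Young's inequality. This yields
\[
\|\Delta Y\|^2_{\beta,T} + \|\Delta Z\|^2_{\beta,T} + \sum_{i=1,2}\|\Delta U_i\|^2_{\beta,T} \le \frac{C(K_f)}{\beta-C'(K_f)}\bigl(\|\Delta y\|^2_{\beta,T}+\|\Delta z\|^2_{\beta,T}+\sum_{i=1,2}\|\Delta u_i\|^2_{\beta,T}\bigr),
\]
which is a strict contraction for $\beta$ sufficiently large; Banach's theorem then produces the unique fixed point.

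The main obstacle is twofold. First, one needs the predictable representation property for $\mathbb{G}$ on $[0,T]$ with respect to $(W,M_1,M_2)$; this is available in the present setting thanks to the mutual independence of the Brownian motion and the exponential triggers $E_1,E_2$, but it must be invoked carefully. Second, in the Itô estimate the natural $L^2$-isometry for the jump integrands involves the compensator $(1-N_i(s))h_i(s)\,ds$ rather than Lebesgue measure, so passing between this compensator norm and $\|U_i\|_{\beta,T}$ requires either an $L^\infty$ control on the hazard rates $h_i$ or an equivalent reformulation of the solution space weighted by $h_i$. Once this technicality is handled, the contraction estimate is routine and the remainder of the argument mirrors the classical Pardoux--Peng proof.
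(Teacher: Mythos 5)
Your proposal follows essentially the same route as the paper's proof: freeze the coefficients, represent the conditional-expectation martingale via the $\mathbb{G}$-martingale representation with respect to $(W,M_1,M_2)$, and show the resulting Picard map is a contraction on $\mathbb{S}^2_{\beta,T}\times\mathbb{H}^{2,n+2}_{\beta,T}$ for sufficiently large $\beta$, concluding by the Banach fixed point theorem. The technicality you flag --- that the It\^o isometry controls the jump integrands $U_i$ only in the compensator norm $(1-N_i(s))h_i(s)\,ds$ rather than in the Lebesgue-weighted norm $\|\cdot\|_{\beta,T}$ --- is a legitimate point that the paper's sketch passes over silently (the paper only introduces bounded hazard rates as Assumption~2 for Theorem~2, cf.\ Remark~3 on stochastic Lipschitz BSDEs), so your explicit acknowledgment of it is if anything more careful than the original.
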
 
 
\begin{proof}[Sketch]
The method of proof is standard, 
although the horizon is random, which is rather ``non-standard''.
We consider a Picard-type iteration, that is, 
for a given $\left(\bar{Y}, \bar{Z}, \bar{U}^1, \bar{U}^2\right) 
\in {\mathbb S}^2_{\beta,T} \times {\mathbb H}^{2,n+2}_{\beta,T}$,
we construct the solution to BSDE
\begin{equation}
\begin{split}
-dY(t)=&f \left(t,\bar{Y}(t), \bar{Z}(t),
\bar{U}_1(t), \bar{U}_2(t)\right)dt \\
&- Z(t)^\top d W(t)- U_1(t) d M_1(t) - U_2(t) d M_2(t), \\
&\qquad t\in [0,\tau], \\
Y(\tau) =&\zeta,
\end{split}
\end{equation}
where we denote 
\begin{align*}
\tau_0:=& \tau_1\wedge\tau_2, \quad
%\text{and}\quad
\tau:=\tau_0\wedge T, \\
\zeta:=&
\phi_1({\tau_1}) 1_{\{ \tau_1<\tau_2\wedge T\}}
+\phi_2({\tau_2}) 1_{\{ \tau_2<\tau_1\wedge T\}}
+\xi_T 1_{\{ T<\tau_1 \wedge \tau_2 \}}.
\end{align*}
Indeed, using the ${\mathbb G}$-martingale representation 
\begin{align*} 
{\mathcal M}(t):=&{\mathbb E}\left[ \zeta 
+ \int_{0}^{\tau}
f \left(u,\bar{Y}(u), \bar{Z}(u), \bar{U}_1(u), \bar{U}_2(u) \right)  
du \biggm|{\mathcal G}_t \right] \\
=&{\mathbb E}\left[ \zeta 
+ \int_{0}^{\tau}
f \left(u,\bar{Y}(u), \bar{Z}(u), \bar{U}_1(u), \bar{U}_2(u) \right)  
du \right] \\
&+\int_0^{t} \phi(u)^\top d W(u)
+ \int_0^{t} \psi_1(u) d M_1(u)
+ \int_0^{t} \psi_2(u) d M_2(u),
\quad t\in [0,T]
\end{align*}  
for some $(\phi,\psi^1,\psi^2)\in {\mathbb H}^{2,n+2}_{\beta,T}$
(e.g., see Section~5.2 of Bielecki and Rutkowski, 2004), 
we define 
\begin{align*}
\tilde{Y}_t:=& {\mathbb E}\left[ \zeta 
+ \int_{t\wedge \tau}^{\tau}
f \left(u,\bar{Y}_u, \bar{Z}_u, \bar{U}_u^1, \bar{U}_u^2 \right)  
du \biggm|{\mathcal G}_t \right],
\quad t\in [0,T], \\
\tilde{Z}:\equiv& \phi,
\quad 
\tilde{U}^1:\equiv \psi^1,
\quad
\tilde{U}^2:\equiv \psi^2.
\end{align*}
Note that the martingale $({\mathcal M}_t)_{t\in [0,T]}$ 
with respect to the right-continuous
filtration ${\mathbb G}$ admits an RCLL modification. 
Hence, 
\[
 \tilde{Y}(t) = {\mathcal M}(t) -
\int_0^{t\wedge \tau}
f \left(u,\bar{Y}(u), \bar{Z}(u), \bar{U}_1(u), \bar{U}_2(u) \right)  
du
\]
also admits an RCLL modification, which is denoted by
$\left(\tilde{Y}(t)\right)_{t\in [0,T]}$ again. 
Furthermore, we can check the integrability,  
$\tilde{Y} \in {\mathbb S}^2_{\beta,T}$.
Hence, 
$\bigl(\tilde{Y},\tilde{Z},\tilde{U}_1,\tilde{U}_2\bigr)$
is the solution to (4).
Next, we show that the map
\[
\Psi: 
{\mathbb S}^2_{\beta,T} \times {\mathbb H}^{2,n+2}_{\beta,T} 
\ni \left( \bar{Y},\bar{Z},\bar{U}_1,\bar{U}_2\right)
\mapsto 
\left( \tilde{Y},\tilde{Z},\tilde{U}_1,\tilde{U}_2\right)
\in {\mathbb S}^2_{\beta,T} \times {\mathbb H}^{2,n+2}_{\beta,T} 
\]
is a contraction for sufficiently large $\beta>0$, 
and 
using the fixed point theorem for the contraction map,
we conclude that the fixed point of the map $\Psi$ is the solution. 
\end{proof} 
\begin{rem}
We refer to Section~19 of Cohen and Elliott (2015) for the detail
of such a Picard-type iteration argument, 
where a more general semimartingale BSDE (driven by L\'evy noise)
is treated with a {\it fixed constant} time horizon. 
\end{rem}
Actually, we can construct 
the solution to BSDE (3) on the filtered probability space 
$(\Omega,{\mathcal F},{\mathbb P}, {\mathbb G})$, 
using another reduced BSDE on the smaller filtered probability space 
$(\Omega,{\mathcal F},{\mathbb P}, {\mathbb F})$.
Assuming
\begin{ass}
$h_i$ ($i=1,2$) are bounded,
\end{ass}
we obtain the following.
\begin{thm}
Under Assumptions 1 and 2, 
the solution $(Y,Z,U_1,U_2)
\in {\mathbb S}^2_{\beta,T} \times {\mathbb H}^{2,n+2}_{\beta,T}$
has the representation that
\begin{equation}
\begin{split}
Y(t)=&
\bar{Y}(t)1_{\{ 0\le t<\tau_1\wedge\tau_2\wedge T \}} \\ 
+ \Bigl\{
\phi_1&({\tau_1}) 1_{\{ \tau_1<\tau_2\wedge T\}}
+\phi_2({\tau_2}) 1_{\{ \tau_2<\tau_1\wedge T\}}
+\xi_T 1_{\{ T<\tau_1 \wedge \tau_2\}}
\Bigr\}
1_{\{ t=\tau_1 \wedge \tau_2\wedge T\}}, \\
Z(t)=& \bar{Z}(t), \\
U_i(t)=& \phi_i(t)-\bar{Y}(t), \quad i=1,2.
\end{split}
\end{equation}
Here,  
$\left( \bar{Y},\bar{Z}\right)
\in {\mathbb S}^2_{\beta,T} \times {\mathbb H}^{2,n}_{\beta,T}$ 
is the solution to a BSDE
on $(\Omega,{\mathcal F},{\mathbb P}, {\mathbb F})$, namely,
\begin{equation}
\begin{split}
 -d\bar{Y}(t) =& \bar{f}\left( t, \bar{Y}(t), \bar{Z}(t)
%\phi_1(t)-\bar{Y}(t),\phi_2(t)-\bar{Y}(t) 
\right)dt
- \bar{Z}(t)^\top dW(t), 
\quad t\in [0,T], \\
Y_T=&\xi_T,
\end{split}
\end{equation}
where
\begin{align*}
 \bar{f}(t,y,z):=f\left(t,y,z,\phi_1(t)-y,\phi_2(t)-y\right) 
+ \left\{ \phi_1(t)-y\right\} h_1(t)
+\left\{ \phi_2(t)-y\right\} h_2(t). 
\end{align*}
\end{thm}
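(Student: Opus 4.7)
The plan is to reduce BSDE (3) on the enlarged filtration $\mathbb{G}$ to the Brownian BSDE (6) on $\mathbb{F}$, and then verify directly that the candidate quadruplet defined in (5) satisfies Definition 1. Since Theorem 1 already supplies existence and uniqueness of the $\mathbb{G}$-solution, it then suffices to exhibit \emph{a} solution with the desired form. First, I would establish that BSDE (6) is itself well posed. The driver $\bar{f}(t,y,z)$ is Lipschitz in $(y,z)$ because $f$ is Lipschitz in all its arguments by Assumption 1(iii) and $h_1, h_2$ are bounded by Assumption 2; the new Lipschitz constant depends only on $K_f$ and $\|h_i\|_\infty$. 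The same bounds together with Assumptions 1(ii) and 1(iv) yield $\bar{f}(\cdot,0,0)\in L^2([0,T]\times\Omega)$, and $\xi_T\in L^2$ by Assumption 1(i). Standard Pardoux--Peng theory then produces a unique $(\bar{Y},\bar{Z})\in\mathbb{S}^2_{\beta,T}\times\mathbb{H}^{2,n}_{\beta,T}$, with $\bar{Y}$ continuous.

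Next, I would define the candidate $(Y,Z,U_1,U_2)$ via (5) and verify the adaptedness and integrability required by Definition 1(a). The process $Y$ is built from the continuous $\mathbb{F}$-adapted $\bar{Y}$ multiplied by the $\mathbb{G}$-adapted indicator $1_{\{t<\tau_1\wedge\tau_2\wedge T\}}$, plus the terminal payoff frozen from $\tau_1\wedge\tau_2\wedge T$ onwards, so $Y$ is $\mathbb{G}$-adapted and RCLL (its only jump occurs at $\tau_1\wedge\tau_2\wedge T$). The processes $\bar{Z}$ and $\phi_i-\bar{Y}$ are $\mathbb{F}$-predictable, hence $\mathbb{G}$-predictable, and the required $\mathbb{S}^2$ and $\mathbb{H}^2$ norms are finite by the $L^2$ bounds on $\bar{Y}$, $\bar{Z}$, and $\phi_i$. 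Condition (b) of Definition 1 then holds by construction.

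The core verification is condition (c). Writing $\tau_0:=\tau_1\wedge\tau_2$ as in the sketch of Theorem 1, I would apply It\^o's formula to $\bar{Y}$ on $[0,t\wedge\tau_0\wedge T]$ and then add the jump contribution at $\tau_0\wedge T$ on the event $\{\tau_0\le T\}$. On $[0,\tau_0\wedge T)$ we have $Y=\bar{Y}$, $Z=\bar{Z}$, and $N_1=N_2=0$, so $dM_i=dN_i-h_i(t)\,dt$. Substituting $U_i(t)=\phi_i(t)-\bar{Y}(t)$ into the right-hand side of BSDE (3), the $dt$ coefficient becomes
\[
f\bigl(t,\bar{Y}(t),\bar{Z}(t),\phi_1(t)-\bar{Y}(t),\phi_2(t)-\bar{Y}(t)\bigr)+\bigl(\phi_1(t)-\bar{Y}(t)\bigr)h_1(t)+\bigl(\phi_2(t)-\bar{Y}(t)\bigr)h_2(t),
\]
which is precisely $\bar{f}(t,\bar{Y}(t),\bar{Z}(t))$, matching the dynamics of $\bar{Y}$ in BSDE (6). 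At the jump time, on the event $\{\tau_i=\tau_0\le T\}$ the process $Y$ jumps from $\bar{Y}(\tau_i-)=\bar{Y}(\tau_i)$ to $\phi_i(\tau_i)$, a jump of $\phi_i(\tau_i)-\bar{Y}(\tau_i)=U_i(\tau_i)$, which agrees with $U_i(t)\,\Delta N_i(t)=U_i(t)\,\Delta M_i(t)$; on $\{T<\tau_0\}$ there is no jump and $Y(T)=\bar{Y}(T)=\xi_T$. All three terminal cases of (3) are thus recovered.

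The main obstacle, as usual in this reduction, is the bookkeeping at the random horizon: one must verify that substituting $U_i=\phi_i-\bar{Y}$ genuinely converts the compensator piece $U_i(1-N_i)h_i\,dt$ hidden in $\int U_i\,dM_i$ into the additional drift terms building up $\bar{f}$, and that the $\mathbb{G}$-predictable representatives of $Z$ and $U_i$ are selected consistently (continuity of $\bar{Y}$ gives $\bar{Y}(\tau_i-)=\bar{Y}(\tau_i)$, which removes any left/right ambiguity). Assumption 2 is used both to obtain the Lipschitz and $L^2$ bounds needed for $\bar{f}$ and to ensure the compensators $\int_0^\cdot(1-N_i)h_i\,ds$ have integrable total variation on $[0,T]$. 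Once Definition 1(a)--(c) are checked, the uniqueness half of Theorem 1 identifies the candidate with the unique $\mathbb{G}$-solution, proving (5).
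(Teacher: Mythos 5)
Your proposal is correct and follows essentially the same route as the paper: both reduce to the Brownian BSDE (6), whose driver $\bar{f}$ absorbs the compensator terms $u_i h_i(t)$, then verify by direct computation (drift matching on $\{t<\tau_1\wedge\tau_2\wedge T\}$, the jump ${\mathit\Delta}Y=U_i(\tau_i)$ at the random horizon, the terminal cases, and the integrability of the candidate) that (5) solves the $\mathbb{G}$-BSDE, concluding via the uniqueness supplied by Theorem~1. The only cosmetic difference is that the paper organizes the verification around the rewritten system (7), whereas you check conditions (a)--(c) of Definition~1 directly.
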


\begin{rem}
Similar reduction results for BSDEs (into smaller filtrations)
have been studied by 
Cr\'epey and Song (2016) and Pham (2010) in more-general settings.
\end{rem}

\begin{proof}[Sketch]
Note that BSDE (3) is rewritten as
\begin{equation}
\begin{split}
 -dY(t) =& \tilde{f}\left( t, Y(t), Z(t), U_1(t),U_2(t) \right)dt
- Z(t)^\top dW(t) \\
&\text{on \ $\{ 0\le t<\tau_1 \wedge \tau_2\wedge T \}$,} \\
{\mathit\Delta}Y(t)=&
U_1({\tau_1}) 1_{\{ \tau_1<\tau_2 \wedge  T\}}
+U_2({\tau_2}) 1_{\{ \tau_2<\tau_1 \wedge T\}}, \\
Y(t)=& 
\phi_1({\tau_1}) 1_{\{ \tau_1<\tau_2\wedge T\}}
+\phi_2({\tau_2}) 1_{\{ \tau_2<\tau_1\wedge T\}}
+F_T 1_{\{ T<\tau_1 \wedge \tau_2\}}\\
&\text{on \ $\{ t={\tau_1 \wedge \tau_2\wedge T}\}$,}
\end{split}
\end{equation}
where we use 
${\mathit\Delta}Y(t):=Y(t)-Y(t-)$ and
\[
 \tilde{f}\left(t,y,z,u_1,u_2\right)
=f\left(t,y,z,u_1,u_2\right)
+u_1 h_1(t)+ u_2 h_2(t).
\]
We show that if we define $(Y,Z,U^1,U^2)$ by (5), then 
it actually satisfies (7).
First, we see that BSDE (6)
on $(\Omega,{\mathcal F},{\mathbb P}, {\mathbb F})$
has a unique solution $(\bar{Y},\bar{Z})\in {\mathbb S}^2_{\beta,T}
\times {\mathbb H}^{2,n}_{\beta,T}$ for any sufficiently large 
$\beta>0$, recalling that
$\bar{f}$ is a standard driver 
(e.g., $\bar{f}(t,y,z)$ satisfies a globally Lipschitz condition
with respect to $(y,z)$).
Next, we can check that (5) indeed satisfies (7); for example, 
on $\{ t=\tau_1\wedge\tau_2\wedge T\}$, 
\begin{align*}
{\mathit\Delta}{Y}(t) =&
\phi_1({\tau_1}) 1_{\{ \tau_1<\tau_2\wedge T\}}
+\phi_2({\tau_2}) 1_{\{ \tau_2<\tau_1\wedge T\}}
+\xi_T 1_{\{ T<\tau_1 \wedge \tau_2\}}
- \bar{Y}(t-)  \\
=&
\phi_1({\tau_1}) 1_{\{ \tau_1<\tau_2\wedge T\}}
+\phi_2({\tau_2}) 1_{\{ \tau_2<\tau_1\wedge T\}}
+\xi_T 1_{\{ T<\tau_1 \wedge \tau_2\}} \\
&-\left( \bar{Y}(\tau_1\wedge\tau_2) 
1_{\{\tau_1\wedge\tau_2\le T\}}+\xi_T 1_{\{\tau_1\wedge\tau_2>T\}}\right) \\
=& 
U_1({\tau_1}) 1_{\{ \tau_1<\tau_2 \wedge  T\}}
+U_2({\tau_2}) 1_{\{ \tau_2<\tau_1 \wedge T\}}.
\end{align*}
Hence, the desired assertion follows as it is easy to see
the integrabilities given by (5), $(Y,Z,U_1,U_2)
\in {\mathbb S}^2_{\beta,T} \times {\mathbb H}^{2,n+2}_{\beta,T}$.
\end{proof}

\begin{rem}
We impose Assumption~2 to simplify the statement of Theorem~2. 
We can relax it by employing a different solution space 
(from ${\mathbb S}^2_{\beta,T}\times {\mathbb H}^{2,n+2}_{\beta,T}$) 
associated with the so-called {\it stochastic} Lipschitz BSDEs.
For the study of such BSDEs, 
see El Karoui and Huang (1997) and Nagayama (2019), for example.
\end{rem}

\subsection{Markovian Model}

When we treat BSDE (3) in a practical application, 
%which is a situation in Section~3, 
more-concrete modeling is preferable:
%by expressing the random driver function $f$
%and the terminal values $\phi_1,\phi_2,\xi_T$ 
%as functions of a Markovian state variable process.
In this subsection, 
we consider BSDE (3) under Assumptions~1 and 2 and the following setting.
\begin{itemize}
 \item[(i)] 
There is a Markovian state variable process $X:=(X(t))_{t\ge 0}$, 
which is governed by the following Markovian forward stochastic 
differential equation (FSDE), namely, 
\begin{equation}
dX(t) = b(t,X(t))dt + a(t,X(t))dW(t),
\quad X(0)\in {\mathbb  R}^d,
\end{equation}
on $(\Omega,{\mathcal F},{\mathbb P}, ({\mathcal F}_t)_{t\ge 0})$,
where $a:{\mathbb R}_+ \times{\mathbb R}^d \to {\mathbb R}^{d\times n}$
and $b:{\mathbb R}_+ \times {\mathbb R}^d \to {\mathbb R}^{d}$. 
 \item[(ii)]
$h_i(t):=\tilde{h}_i(X(t))$, $i=1,2$, 
where $\tilde{h}_i:{\mathbb R}^d \to {\mathbb R}_+$ is bounded.
 \item[(iii)]
The driver 
$f: [0,T]\times \Omega \times {\mathbb R} \times {\mathbb R}^n
\times {\mathbb R}^2 \to {\mathbb R}$
of BSDE (3) is written as
\[
 f(t,\omega, y,z,u_1,u_2):=
g(t,X(t,\omega),y,z,u_1,u_2),
\]
where $g:[0,T]\times {\mathbb R}^d
\times {\mathbb R}\times {\mathbb R}^n
\times {\mathbb R}\times {\mathbb R}\to {\mathbb R}$.
 \item[(iv)]
$\displaystyle \xi_T:=\Xi(X(T))$, 
where $\Xi:{\mathbb R}^d\to {\mathbb R}$.
 \item[(v)]
$\displaystyle \phi_i(t):=\varphi_i(X(t))$, $i=1,2$, 
where $\varphi_i:{\mathbb R}^d\to {\mathbb R}$.
\end{itemize}
In this case, the solution to BSDE (3) can be constructed as follows 
using the solution to a second-order parabolic semilinear partial differential equation (PDE).
\begin{thm}
Consider the second-order parabolic semilinear PDE 
\begin{equation}
\begin{split}
-\partial_t V(t,x)=&{\mathcal L}_t V(t,x)
+ \bar{g}\left( t,x, V(t,x), a(t,x)^\top \nabla V(t,x)\right), 
\quad (t,x)\in [0,T)\times {\mathbb R}^d, \\
V(T,x)=&\Xi(x),
\end{split}
\end{equation}
where 
\begin{equation}
 {\mathcal L}_tV:=\frac{1}{2}
{\rm tr}\left( aa^\top(t,\cdot) \nabla\nabla V\right)
+ b^\top(t,\cdot) \nabla V
\end{equation}
is the infinitesimal generator for $X$ 
with the gradient 
$\nabla V:=\left( \partial_{x_1} V,\dots,\partial_{x_d} V\right)^\top$
and the Hessian matrix
$\nabla\nabla V:=\left( \partial_{x_i x_j}^2 V\right)_{1\le i,j\le d}$, 
and
\[
\bar{g} (t,x,y,z):= 
g\left( t,x, y, z, \varphi_1(x)-y, \varphi_2(x)-z\right) 
+ \sum_{i=1}^2 \left\{ \varphi_i(x)-y\right\}\tilde{h}_i(x).
\]
Suppose that there exists a unique classical solution 
$V\in C^{1,2}([0,T]\times {\mathbb R}^d)$ to (9). 
Then, the solution to BSDE (3) is represented as 
\begin{align*}
Y(t)=&
V\left(t,X(t)\right)1_{\{ 0\le t<\tau_1\wedge\tau_2\wedge T\}} 
+ \Bigl\{
\varphi_1\left(X({\tau_1})\right) 1_{\{ \tau_1<\tau_2 \wedge T\}} \\
&+\varphi_2\left(X({\tau_2})\right) 1_{\{ \tau_2<\tau_1 \wedge T\}}
+\Xi\left(X(T)\right) 1_{\{ T<\tau_1 \wedge \tau_2\}}
\Bigr\} 1_{\{ t=\tau_1 \wedge \tau_2\wedge T\}}, \\
Z(t)=& a\left(t,X(t)\right)^\top \nabla V\left(t,X(t)\right), \\
U_i(t)=& \varphi_i\left(X(t)\right)-V\left(t,X(t)\right), \quad i=1,2.
\end{align*}
\end{thm}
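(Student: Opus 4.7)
The plan is to combine the reduction result of Theorem~2 with a classical Feynman--Kac argument. Under the Markovian assumptions (i)--(v), the reduced driver $\bar{f}$ appearing in BSDE~(6) becomes a deterministic function of $(t,X(t),\bar{Y}(t),\bar{Z}(t))$, and in fact coincides with $\bar{g}(t,X(t),\bar{Y}(t),\bar{Z}(t))$. Thus BSDE~(6) takes the form of a Markovian BSDE coupled with the FSDE~(8) and, by Theorem~2, once its unique solution $(\bar{Y},\bar{Z})$ is identified, the quadruplet $(Y,Z,U_1,U_2)$ in the statement is recovered via formula~(5).

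First I would set $\tilde{Y}(t):=V(t,X(t))$ and $\tilde{Z}(t):=a(t,X(t))^\top\nabla V(t,X(t))$, and apply It\^o's formula to $V(t,X(t))$, using the smoothness $V\in C^{1,2}$ and the FSDE for $X$. This gives
\[
dV(t,X(t)) = \bigl\{\partial_t V(t,X(t)) + \mathcal{L}_t V(t,X(t))\bigr\}\,dt + \nabla V(t,X(t))^\top a(t,X(t))\,dW(t).
\]
Substituting the PDE~(9) to replace $\partial_t V + \mathcal{L}_t V$ by $-\bar{g}(t,X(t),V(t,X(t)),a(t,X(t))^\top\nabla V(t,X(t)))$, the process $\tilde{Y}$ satisfies the dynamics of BSDE~(6) on $[0,T]$, with terminal value $\tilde{Y}(T)=V(T,X(T))=\Xi(X(T))=\xi_T$.

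Next I would invoke uniqueness of the solution to the reduced BSDE~(6). The driver $\bar{g}$ is globally Lipschitz in $(y,z)$ because $g$ is Lipschitz in its last four arguments by Assumption~1(iii) and the coefficients $\tilde{h}_i$ are bounded; hence Theorem~1 (or the standard Pardoux--Peng existence/uniqueness result) applies to~(6). Comparing with the dynamics just derived, one concludes $(\bar{Y}(t),\bar{Z}(t))=(V(t,X(t)),a(t,X(t))^\top\nabla V(t,X(t)))$ in ${\mathbb S}^2_{\beta,T}\times{\mathbb H}^{2,n}_{\beta,T}$. Feeding this identification into formula~(5) of Theorem~2 yields the announced representations for $Y$, $Z$, and $U_i$, the jump formula $U_i(t)=\varphi_i(X(t))-V(t,X(t))$ being immediate from $U_i(t)=\phi_i(t)-\bar{Y}(t)$.

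The main obstacle is not the It\^o/PDE computation but the integrability step $(\tilde{Y},\tilde{Z})\in{\mathbb S}^2_{\beta,T}\times{\mathbb H}^{2,n}_{\beta,T}$, which is needed before uniqueness in the appropriate functional space can be invoked. Since the theorem merely postulates the existence of a classical solution $V$, one has to assume enough growth control on $V$ and $a^\top\nabla V$ (e.g., polynomial growth in $x$, together with finite moments of $\sup_{t\le T}|X(t)|$ coming from standard Lipschitz/growth hypotheses on $a,b$); this should either be added as a standing requirement or verified ex post from additional structural conditions on $g$, $\Xi$, and $\varphi_i$. Once this verification is in place, the argument closes cleanly and no further work is needed.
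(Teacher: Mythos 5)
Your proposal takes essentially the same route as the paper's own (sketched) proof: reduce BSDE~(3) to the Brownian-filtration BSDE~(6) via Theorem~2, identify the reduced driver with $\bar{g}$, and invoke the nonlinear Feynman--Kac formula for the decoupled FBSDE --- which the paper simply cites (El Karoui et al., 2000; Zhang, 2017) and you verify directly via It\^o's formula plus uniqueness for the Lipschitz driver. Your closing observation, that the candidate pair $\left(V(t,X(t)),\ a(t,X(t))^\top \nabla V(t,X(t))\right)$ must be shown to lie in ${\mathbb S}^2_{\beta,T}\times {\mathbb H}^{2,n}_{\beta,T}$ (e.g., via polynomial growth of $V$ and $a^\top\nabla V$ together with moment bounds on $X$) before uniqueness can be applied, is a legitimate point that the paper's sketch leaves implicit, not a divergence in method.
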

\begin{proof}[Sketch]
%We use Theorem~2, combined with the so-called nonlinear Feynman-Kac formula.
Associated with BSDE (6), we consider the (decoupled) forward-backward stochastic differential equation (FBSDE)
\begin{equation}
\begin{split}
dX(t)=& b\left( t,X(t)\right)dt + a\left( t,X(t)\right)dW(t), \\
X(0)\in& {\mathbb R}^d, \\
-d\bar{Y}(t) =&
\bar{g}\left( t,X(t), \bar{Y}(t), \bar{Z}(t)\right) dt
-\bar{Z}(t)^\top dW(t), \\
\bar{Y}(T)=&\Xi(X(T)).
\end{split}
\end{equation}
By the nonlinear Feynman--Kac formula 
(e.g., see El Karoui et al., 2000 or Zhang, 2017), 
the solution to (11) is expressed as 
\[
 \bar{Y}(t):=V\left(t,X(t)\right),
\quad 
\bar{Z}(t):=a\left(t,X(t)\right)^\top\nabla V\left(t,X(t)\right),
\quad t\in [0,T].
\]
The desired assertion follows by using Theorem~2.
\end{proof}

\begin{rem}
In the study of credit risk modeling in mathematical finance, 
similar techniques, namely 
the reduction of a BSDE (onto a Brownian filtration)
combined with the (nonlinear) Feynman--Kac formula,  
have been utilized: 
see Bichuch et al.\ (2015), Bielecki et al.\ (2005), 
and Cr\'epey (2015), for example. 
\end{rem}

\section{XVA Calculation via BSDE}

In this section, we introduce a 
``post-crisis'' financial market model
and a hedger's model for pricing OTC financial derivative securities, 
which generalize those employed by Bichuch et al.\ (2015, 2018) 
and Tanaka (2019).
We then derive BSDEs that describe the self-financing hedging 
portfolio values of the hedger (seller) and her counterparty (buyer). 
After preparing mathematical models of a financial market, 
a hedger, and her counterparty, 
we formulate hedging problems and give the definition 
of the arbitrage-free price of a derivative security.
Throughout this section, 
we continue to use the mathematical setup introduced in Section~2.

\subsection{Non-defaultable/Defaultable Risky Assets}

Let $T\in {\mathbb R}_{++}$ be a fixed time horizon, and 
consider a frictionless financial market model in continuous time. 
In it, there are price processes of
$n$ non-defaultable risky assets 
$S:=(S_1,\dots,S_n)^\top$, $S_i:=(S_i(t))_{t\in [0,T]}$, 
one defaultable risky asset 
$P_I:=(P_I(t))_{t\in [0,T]}$ issued by an investor's firm, 
and one defaultable risky asset 
$P_C:=(P_C(t))_{t\in [0,T]}$ issued by the firm 
of a counterparty of the investor. 
They are governed by the following stochastic differential equations (SDEs)
on $(\Omega,{\mathcal F},{\mathbb P}, {\mathbb G})$: 
\begin{align}
dS(t) =& \text{diag}\left(S(t)\right) 
\left\{ \sigma(t) dW(t) + r_D(t){\bf 1}dt\right\}, 
\quad S(0)\in {\mathbb R}^n_{++}, \\
dP_I(t) =& P_I(t-) 
\left\{ \sigma_I(t)dW(t) -dM_1(t)+r_D(t) dt\right\}, 
\quad P_I(0)\in {\mathbb R}_{++},\\
dP_C(t) =& P_C(t-) 
\left\{ \sigma_C(t)dW(t)-dM_2(t)+r_D(t) dt\right\},
\quad P_C(0)\in {\mathbb R}_{++}.
\end{align}
Here,  
$\sigma\in ({\mathcal P}_{{\mathbb F},T})^{n\times n}$,
$\sigma_i\in ({\mathcal P}_{{\mathbb F},T})^{1\times n}$, $i\in \{I,C\}$, 
and $r_D\in {\mathcal P}_{{\mathbb F},T}$, 
which are assumed to be bounded, 
and $\sigma(t,\omega)$ is invertible 
for a.e.\ $(t,\omega)\in [0,T]\times\Omega$. 
Furthermore, we denote 
$\text{diag}(x)=(x_i \delta_{ij})_{1\le i,j\le n}$ 
for $x:=(x_1,\dots,x_n)^\top \in {\mathbb R}^n$ 
and ${\bf 1}:=(1,\dots,1)^\top \in {\mathbb R}^n$. 
\begin{rem}
We regard the process $r_D$ as the risk-free interest rate process
in the market, which does not contain credit risk spread.\footnote{
A typical example of such an interest rate in a real financial market
is the OIS rate.}
Define the cash account process 
$B_D:=(B_D(t))_{t\ge 0}$ 
associated with the risk-free rate $r_D$ by
\[
 dB_D(t)=B_D(t) r_D(t)dt,
\quad B_D(0)=1,
\]
or equivalently 
\[
 B_D(t)= \exp\left\{ \int_0^t r_D(u)du\right\}.
\]
We then see that
\[
\frac{S_i}{B_D},
\quad
i=1,\dots,n,
\quad
\frac{P_j}{B_D}, 
\quad j=1,2
\]
are $\mathbb G$-local martingales.
These mean that we are starting with the probability space
$(\Omega,{\mathcal F},{\mathbb P})$ with 
a risk-neutral (pricing) probability $\mathbb P$,\footnote{More
 precisely, $\mathbb P$ is an equivalent martingale measure (EMM). 
See Remark~13 in Subsection~3.5.} 
not with the real-world (physical) probability. 
%Or, more concretely, we may begin by describing the price dynamics 
%in $(\Omega,{\mathcal F},{\mathbb P}_0,{\mathbb G})$
%under the real world probability ${\mathbb P}_0$:
%\begin{align*}
%dS(t) =& \text{diag}\left(S(t)\right) 
%\left\{ \sigma(t) dW^0(t) + \mu(t)dt\right\}, 
%\quad S(0)\in {\mathbb R}^n_{++}, \\
%dP_1(t) =& P_1(t-) 
%\left\{ \lambda_1(t)dW^0(t) -dM^0_1(t)+\eta_1(t) dt\right\}, 
%\quad P_1(0)\in {\mathbb R}_{++},\\
%dP_2(t) =& P_2(t-) 
%\left\{ \lambda_2(t)dW^0(t)-dM^0_2(t)+\eta_2(t) dt\right\},
%\quad P_2(0)\in {\mathbb R}_{++},
%\end{align*}
\end{rem}
The random times $\tau_1$ and $\tau_2$ defined by (2)
are interpreted as the default times of the investor who issues $P_I$
and the counterparty who issues $P_C$, respectively.
We solve (13) as
\begin{align*}
 P_I(t)=P_I(0) 
%\times
 \exp\left[
\int_0^t \sigma_I(u)dW(u)
+\int_0^t\left(
r_D(u)+ h_1(u)
-\frac{1}{2} |\sigma_I(u)|^2
\right)du
\right]
\left\{ 1-N_1(t)\right\},
\end{align*}
for example.
Recall that the price becomes zero when defaults occur, 
i.e., $P_I(\tau_1)=0$.
\begin{rem}
As concrete examples of $P_I$  and $P_C$,
we can consider $T$-maturity zero coupon bonds
without recoveries, namely 
\begin{align*}
 P_I(t)=& {\mathbb E}\left[ \exp\left\{ -\int_t^T 
\left( r_D(u)+h_1(u)\right)du
\right\}\biggm| {\mathcal F}_t\right]
\left\{ 1-N_1(t)\right\}, \\
 P_C(t)=& {\mathbb E}\left[ \exp\left\{ -\int_t^T 
\left( r_D(u)+h_2(u)\right)du
\right\}\biggm| {\mathcal F}_t\right]
\left\{ 1-N_2(t)\right\}. 
\end{align*}
The volatility terms 
$(\sigma_j(t))_{t\in [0,T]}$ ($j\in\{I,C\}$) are described 
by using 
the $({\mathbb P},{\mathcal F}_t)$-Brownian 
martingale representation: 
For example, in the $j=I$ case, 
$\left( \sigma_I(t)\right)_{t\in [0,T]}$
is determined to satisfy
\begin{align*}
{\mathbb E}\left[ \exp\left\{ -\int_0^T 
\left(r_D(u)+h_1(u)\right)du\right\}
\biggm|{\mathcal F}_t\right]  
=P_I(0) \exp\left\{
\int_0^t \sigma_I(s)dW(s)
-\frac{1}{2}\int_0^t |\sigma_I(s)|^2 ds
\right\}
\end{align*}
for $ t\in [0,T]$.
\end{rem}

\subsection{Defaultable Derivative Security}

We treat the following derivative security
in our financial market model.
\begin{defi}
A European derivative security is described as
\[
\left( T, \tau_1,\tau_2, \xi_T, \phi_1,\phi_2\right), 
\]
where 
$\xi_T\in L^2\left( \Omega,{\mathcal F}_T,{\mathbb P}\right)$
and
$\phi_i\in \left\{ \phi\in {\mathcal O}_{{\mathbb F},T} \bigm| 
{\mathbb E}\left[ \sup_{t\in [0,T]}|\phi(t)|^2 \right]<\infty\right\}$ 
$(i=1, 2)$.
Here, 
\begin{itemize}
 \item $\tau_1\wedge \tau_2\wedge T$ is the maturity,
 \item 
$\xi_T$ is the payoff at the maturity when no default occurs,
 \item $\phi_1(\tau_1)$ is the payoff at the maturity when the investor
       defaults,
 \item $\phi_2(\tau_2)$ is the payoff at the maturity when the 
counterparty defaults.
\end{itemize}
This means that at the maturity, 
\begin{equation}
H:=
 \xi_T 1_{\{ T<\tau_1\wedge\tau_2\}}
+\phi_1(\tau_1) 1_{\{ \tau_1 <\tau_2, \tau_1\le T\}}
+\phi_2(\tau_2) 1_{\{ \tau_2 <\tau_1, \tau_2\le T\}}
\end{equation}
is paid to the counterparty (buyer)
from the investor (seller, writer).
\end{defi}
\begin{rem}
A typical example of the payoff $(\xi_T,\phi_1,\phi_2)$ is 
\[
\xi_T :=h\left((S(t))_{t\in [0,T]}\right)
\]
with $h: C([0,T],{\mathbb R}^n_{++}) \to {\mathbb R}$ and, 
for $i=1,2$, 
\[
 \phi_i(t):=\varphi_i\left( \hat{V}(t)\right)
\]
with some nonlinear (piecewise-linear) 
$\varphi_i:{\mathbb R}\to {\mathbb R}$ and
\begin{equation}
\hat{V} (t):=
{\mathbb E}\left[\exp\left\{-\int_t^T r_D(u)du\right\} \xi_T
\biggm| {\mathcal F}_t\right],
\quad t\in [0,T].
\end{equation}
(16) is interpreted as the reference value process of 
the derivative $(T,\xi_T)$ with the payoff $\xi_T$ at the maturity $T$
in a default-free market.
In Bichuch et al.\ (2018), 
\begin{equation}
\varphi_1(v):= 
v -L_I \left( v-\alpha v\right)^{+} 
\quad\text{and}\quad
\varphi_2(v):= 
v +L_C \left( v-\alpha v\right)^{-}
\end{equation}
are employed, 
where $x^+:=\max(x,0)$, $x^-:=\max(-x,0)=-\min(x,0)$, 
$0\le L_I,L_C,\alpha\le 1$.
The constant $L_I$ (resp.\ $L_C$) is called the loss rate upon default of 
the investor (resp.\ the counterparty), 
and $\alpha$ is called the  collateralization level.
For a more detailed explanation, 
see Sections~3.2 and 3.4 of Bichuch et al.\ (2018).
\end{rem}

\subsection{Dynamic Portfolio Strategy}

For hedging purposes,
the writer (seller) of the derivative security given in Definition~2
constructs a dynamic portfolio, which is 
denoted by $\left( \pi,\pi^I,\pi^C,\pi^f,\pi^r,\pi^{col}\right)$. 
Here, 
\[
\pi:=\left( \pi_1,\dots,\pi_n\right)^\top
\in \left( {\mathcal P}_{{\mathbb G},T}\right)^n,
\quad
\pi_j:=(\pi_j(t))_{t\in [0,T]}
\]
is an investment strategy for the risky assets $S:=(S^1,\dots,S^n)^\top$,
\[
\pi^j:=(\pi^j(t))_{t\in [0,T]} \in {\mathcal P}_{{\mathbb G},T},
\quad 
j\in \{ I,C\}
\]
are investment strategies for the risky assets $P_I$ and $P_C$, 
respectively, and
\[
\pi^j:=(\pi^j(t))_{t\in [0,T]} \in {\mathcal P}_{{\mathbb G},T},
\quad 
j\in \{ f,r,{col}\}
\]
are investment strategies for the cash accounts 
$B_f$, $B_r$, and $B_{col}$, 
which are called 
the funding account,
the repo account, and
the collateral account, respectively.
They are defined by 
\begin{equation}
 dB_j(t) = B_j(t)
\left\{ r_j^-(t) 1_{\{ \pi^j(t)<0\}} 
+r_j^+ (t) 1_{\{ \pi^j(t)>0\}}\right\} dt,
\quad B_j(0)=1
\end{equation}
with 
$r_j^{-}:=(r_j^{-}(t))_{t\in [0,T]}\in {\mathcal P}_{{\mathbb F},T}$, 
$r_j^{+}:=(r_j^{+}(t))_{t\in [0,T]}\in {\mathcal P}_{{\mathbb F},T}$, 
and $j\in\{ f,r,col\}$, 
where $r_f^\pm, r_r^\pm$ and $r_{col}^\pm$ are called
the funding rate, the repo rate, and the collateral rate, respectively.
%Also, for $j\in\{ f,r\}$, 
%$r_j^-$ is called borrowing rate and $r_j^+$ 
%is called lending rate, respectively. 
\begin{rem}
The cash account process $B_f$
represents the cumulative amount of cash
that the hedger borrows from (or lends to) her treasury desk.
The rate $r_f^-$ is called the funding borrowing rate
and the rate $r_f^+$ is called the funding lending rate. 
The cash account process $B_r$
represents the cumulative amount of cash
that the investor borrows from (or lends to) a repo market.
The rate $r_r^-$
is called the repo borrowing rate, which is applied 
when the hedger borrows money from the repo market and 
implements a long position
for the non-defaultable risky assets $S$. 
The rate $r_r^+$ is called the repo lending rate, which is applied
when the hedger lends money to the repo market and
implements a short-selling position
for the non-defaultable risky assets $S$. 
The cash account process $B_{col}$
represents the cumulative amount of cash
that the investor receives from (or posts to)
the counterparty as the collateral of the derivative security.
The rate $r_{col}^-$ is paid by the hedger to the counterparty if he/she 
has received the collateral.
The rates $r_{col}^+$ is received by the hedger 
if he/she has posted the collateral.
These rates can differ because different markets\footnote{
For example, the choice of currency (USD, Euro, etc.).
We refer the interested reader to Fujii and Takahashi (2011), 
where the impact of the choice of currency of collateral is studied.}
may be used to determine the contractual rates earned by cash collateral.
\end{rem}
For $r_f^{\pm}$ and $r_r^{\pm}$, it is natural and realistic 
to assume that
\begin{equation}
2\epsilon_j:\equiv r_j^--r_j^+\ge 0
\quad\text{for $j\in\{ f,r\}$.}
\end{equation}
For $j\in\{f,r\}$, denoting the ``mid-rate'' by
\[
r_j^0:\equiv \frac{r_j^-+r_j^+}{2},
\]
we see that
\[
r_j^{\pm}\equiv r_j^0 \mp \epsilon_j.
\]
The value process $Y:=(Y(t))_{t\in [0,T]}$ 
associated with a given dynamic portfolio strategy 
$\left( \pi,\pi^I,\pi^C,\pi^f,\pi^r,\pi^{col}\right)$
is governed by an SDE on $(\Omega,{\mathcal F},{\mathbb P},{\mathbb G})$, namely, 
\begin{equation}
\begin{split}
 dY(t) =& \pi(t)^\top dS(t)
+\pi^I(t) dP_I(t) + \pi^C(t) dP_C(t) \\
&+\pi^f(t) dB_f(t) + \pi^r(t) dB_r(t) 
+\pi^{col}(t) dB_{col}(t), \\
 Y(0)=&y, 
\end{split}
\end{equation}
subject to 
\begin{align}
&Y(t) = \pi(t)^\top S(t)
+\pi^I(t) P_I(t) + \pi^C(t) P_C(t) \nonumber \\
&\qquad\qquad+\pi^f(t) B_f(t) + \pi^r(t) B_r(t) 
+\pi^{\rm col}(t) B_{col}(t), \\
&\pi(t)^\top S(t) +\pi^r(t) B_r(t) =0, \\
&\pi^{col}(t) B_{col}(t)-\alpha \hat{V}(t)=0.
\end{align}
Here, 
(21) corresponds to the so-called self-financing condition,
(22) implies that the hedger accesses the repo market to purchase/sell
non-defaultable risky assets (stocks), 
and (23) implies that $\alpha \hat{V}(t)$ is regarded as 
the collateral value at time $t$, where
$\alpha\in [0,1]$ is the collateral level, which is 
the same as the one given in Remark~7.
From (21)--(23), recall that the relations 
%are not chosen independently as we have three constraints 
\begin{align}
\pi^r(t)  =&-B_r(t)^{-1}\pi(t)^\top S(t), \\
\pi^{col}(t) =&B_{col}(t)^{-1}\alpha \hat{V}(t), \\
\pi^f(t)  =&
B_f(t)^{-1}\left\{
Y(t-) -\pi^I(t) P_I(t-) - \pi^C(t) P_C(t-)-\alpha \hat{V}(t)\right\}
\end{align}
hold. Hence, 
we can interpret that
$(y,\Pi)\in {\mathbb R} \times 
({\mathcal P}_{{\mathbb G},T})^{n+2}$, 
where $\Pi:=\left(\pi,\pi^I,\pi^C \right)$,  
is a portfolio strategy that determines the portfolio value process (20), 
and we sometimes write 
\[
 Y:\equiv Y^{(y,\Pi)},
\]
emphasizing the portfolio strategy $(y,\Pi)$. 
Combining (20) with (12)--(14), (18), and (24)--(26), we see that
\begin{align}
 dY(t) =& \pi(t)^\top \text{diag}\left(S(t)\right)
\left[ \sigma(t)dW(t) + 
\left\{ r_D(t)-r_r(t;\pi^r(t))\right\}{\bf 1}dt\right]
\nonumber \\
&+\pi^I(t) P_I(t-)\left[ \sigma_I(t)dW(t) -dM_1(t)
+\left\{ r_D(t)-r_f(t;\pi^f(t))\right\}dt\right]  \nonumber \\
&+ \pi^C(t) P_C(t-)\left[ \sigma_C(t)dW(t) -dM_2(t) 
+\left\{ r_D(t)-r_f(t;\pi^f(t))\right\}dt\right]  \nonumber \\
&+\left\{ Y(t)-\alpha \hat{V}(t)\right\} r_f(t;\pi^f(t))dt
+\alpha \hat{V}(t) r_{col}(t;\pi^{col}(t)) dt,
\end{align}
where we denote 
\[
 r_j(t;p):= r_j^{-}(t) 1_{\{ p<0\}} 
+r_j^+(t) 1_{\{ p>0\}},
\quad j\in\{ f,r,col\}.
\]
\begin{rem}
Suppose that $r_D\equiv r_f^\pm\equiv r_r^\pm \equiv r_{col}^\pm$.
Then (27) becomes 
\begin{align*}
 dY(t) =& \pi(t)^\top \text{\rm diag}\left(S(t)\right)\sigma(t)dW(t) 
+ \pi^I(t) P_I(t-)\left\{ \sigma_I(t)dW(t) -dM_1(t)\right\}  \\
&+ \pi^C(t) P_C(t-)\left\{ \sigma_C(t)dW(t) -dM_2(t) \right\}
+r_D(t)Y(t)dt,  
\end{align*}
which is solved as
\begin{multline}
Y^{(y,\Pi)}(t)= B_D(t)
\Biggl[
y+ \int_0^t
B_D(s)^{-1}\pi(s)^\top \text{\rm diag}\left(S(s)\right)\sigma(s)dW(s) \\
+\int_0^t B_D(s)^{-1}\pi^I(s)P_I(s-)\left\{\sigma_I(s)dW(s) -dM_1(s)\right\} \\
+\int_0^t B_D(s)^{-1}\pi^C(s)P_C(s-)\left\{\sigma_C(s)dW(s) -dM_2(s)\right\} 
\Biggr].
\end{multline}
That is, the discounted value process $Y/B_D$ is a local martingale, 
which is a standard result shared in a classical framework 
with ``one risk-free rate world.''
\end{rem}
For the derivative security given in Definition~2, 
we call the portfolio strategy 
$(\hat{y},\hat{\Pi})\in {\mathbb R} \times ({\mathcal P}_{{\mathbb G},T})^{n+2}$ 
that satisfies
\begin{equation}
 Y^{(\hat{y},\hat{\Pi})}_{\tau_1\wedge\tau_2\wedge T}=H
\end{equation}
the replicating portfolio strategy for the hedger. 

Furthermore, for pricing purposes, we next consider 
a dynamic portfolio strategy 
$\bigl( -\tilde{\pi},-\tilde{\pi}^I, -\tilde{\pi}^C, 
\tilde{\pi}^f,\tilde{\pi}^r, \tilde{\pi}^{col}\bigr)$ and the 
associated value process $\tilde{Y}$ of the buyer (counterparty).
We define 
\[
\begin{split}
 d\tilde{Y}(t) =& -\tilde{\pi}(t)^\top dS(t)
-\tilde{\pi}^I(t) dP_I(t) - \tilde{\pi}^C(t) dP_C(t) \\
&+\tilde{\pi}^f(t) dB_f(t) + \tilde{\pi}^r(t) dB_r(t) 
+\tilde{\pi}^{col}(t) dB_{col}(t), \\
 \tilde{Y}(0)=&-\tilde{y} 
\end{split}
\]
subject to 
\begin{align}
&\tilde{Y}(t) = -\tilde{\pi}(t)^\top S(t)
-\tilde{\pi}^I(t) P_I(t) - \tilde{\pi}^C(t) P_C(t) \nonumber \\
&\qquad\qquad+\tilde{\pi}^f(t) B_f(t) +\tilde{\pi}^r(t) B_r(t) 
+\tilde{\pi}^{col}(t) B_{col}(t), \\
&-\tilde{\pi}(t)^\top S(t) +\tilde{\pi}^r(t) B_r(t) =0, \\
&\tilde{\pi}^{col}(t) B_{col}(t)+\alpha \hat{V}(t)=0, 
\end{align}
where $\tilde{\pi}\in \left( {\mathcal P}_{{\mathbb G},T}\right)^n$
and $\tilde{\pi}^i\in {\mathcal P}_{{\mathbb G},T}$
for $i\in\{ I,C,f,r,col\}$.
Here, as we see in (32), 
the collateral value at time $t$ is regarded as $-\alpha\hat{V}(t)$,
the opposite value of that for the writer (hedger).
Because we see that
\begin{align*}
\tilde{\pi}^r(t)  =&B_r(t)^{-1}\tilde{\pi}(t)^\top S(t), \\
\tilde{\pi}^{col}(t) =&-B_{col}(t)^{-1}\alpha \hat{V}(t), \\
\tilde{\pi}^f(t)  =&
B_f(t)^{-1}\left\{
\tilde{Y}(t-) +\tilde{\pi}^I(t) P_I(t-) 
+\tilde{\pi}^C(t) P_C(t-)+\alpha \hat{V}(t)\right\}
\end{align*}
from (30)--(32), 
we regard $\bigl(-\tilde{y},-\tilde{\Pi}\bigr)
\in {\mathbb R}\times \left( {\mathcal P}_{{\mathbb G},T}\right)^{n+2}$
with
$\tilde{\Pi}:=\left(\tilde{\pi},\tilde{\pi}^I,\tilde{\pi}^C\right)$
as the portfolio strategy, and we rewrite the dynamics of 
$\tilde{Y}:\equiv \tilde{Y}^{(-\tilde{y},-\tilde{\Pi})}$
as
\begin{align}
 d\tilde{Y}(t) =& -\tilde{\pi}(t)^\top \text{diag}\left(S(t)\right)
\left[ \sigma(t)dW(t) + 
\left\{ r_D(t)-r_r(t;\pi^r(t))\right\}{\bf 1}dt\right]
\nonumber \\
&-\tilde{\pi}^I(t) P_I(t-)\left[ \sigma_I(t)dW(t) -dM_1(t)
+\left\{ r_D(t)-r_f(t;\pi^f(t))\right\}dt\right]  \nonumber \\
&-\tilde{\pi}^C(t) P_C(t-)\left[ \sigma_C(t)dW(t) -dM_2(t) 
+\left\{ r_D(t)-r_f(t;\pi^f(t))\right\}dt\right]  \nonumber \\
&+\left\{ \tilde{Y}(t)+\alpha \hat{V}(t)\right\} r_f(t;\pi^f(t))dt
-\alpha \hat{V}(t) r_{col}(t;\pi^{col}(t)) dt.
\end{align}
\begin{rem}
We have assumed that 
the funding rate $r_{f,I}^\pm$ for the investor (writer) 
and the funding rate $r_{f,C}^\pm$ for the counterparty (buyer)
are identical, i.e., $r_f^{\pm}\equiv r_{f,I}^\pm \equiv r_{f,C}^\pm$, 
which is a restrictive situation. 
However, without such an assumption, 
it looks difficult and complicated to derive 
an explicit sufficient condition to ensure the no-arbitrage property 
(see Theorem~4 and its proof).
\end{rem}
\begin{rem}
Suppose that $r_D\equiv r_f^\pm\equiv r_r^\pm \equiv r_{col}^\pm$.
Using a similar calculation to that in Remark~9, we solve (33) to see that 
$\tilde{Y}^{(-y',-\tilde{\Pi})}\equiv -Y^{(y',\tilde{\Pi})}$, 
where the right-hand side $Y^{(y',\tilde{\Pi})}$ is given by (28)
by letting $y:=y'$ and $\Pi:\equiv \tilde{\Pi}$.
%\begin{multline*}
%\tilde{Y}^{(-\tilde{y},-\tilde{\Pi})}(t)= B_D(t)
%\Biggl[
%-\tilde{y}- \int_0^t
%B_D(s)^{-1}\tilde{\pi}(s)^\top \text{\rm diag}\left(S(s)\right)\sigma(s)dW(s) \\
%-\int_0^t B_D(s)^{-1}\tilde{\pi}^I(s)P_I(s-)
%\left\{\sigma_I(s)dW(s) -dM_1(s)\right\} \\
%-\int_0^t B_D(s)^{-1}\tilde{\pi}^C(s)P_C(s-)
%\left\{\sigma_C(s)dW(s) -dM_2(s)\right\} 
%\Biggr].
%\end{multline*}
%Note that 
%$\tilde{Y}^{(-y',-\tilde{\Pi})}\equiv -Y^{(y',\tilde{\Pi})}$, 
%holds in this case.
\end{rem}
If the portfolio strategy 
$(-\tilde{y},-\tilde{\Pi})
\in {\mathbb R} \times ({\mathcal P}_{{\mathbb G},T})^{n+2}$ 
satisfies
\begin{equation}
 \tilde{Y}^{(-\tilde{y},-\tilde{\Pi})}_{\tau_1\wedge\tau_2\wedge T}=-H
\end{equation}
for the derivative security given in Definition~2, 
then we call it
the replicating portfolio strategy for the buyer.

\subsection{Deriving BSDE}

The replicating portfolio
$(\hat{y},\hat{\Pi})$ that satisfies (29) 
is represented using the solution to a BSDE.
Let 
\begin{align*}
Y^+:\equiv& Y^{(\hat{y},\hat{\Pi})}, \\
U_1^+(t):=& -\pi^I(t) P_I({t-}), \\
U_2^+(t):=& -\pi^C(t) P_C({t-}), \\
Z^+(t):=&
\sigma(t)^\top \text{diag}\left(S(t)\right)\pi(t) 
-U_1^+(t) \sigma_I(t)^\top 
-U_2^+(t) \sigma_C(t)^\top.
\end{align*}
Recalling
\[
\pi^f(t) B_f(t) 
=Y^+(t) +U_1^+(t)+U_2^+(t) -\alpha \hat{V}(t),
\]
we see that 
$\pi^f(t) \ge 0$ (resp.\ $\le 0$) is equivalent to
\[
 Y^+(t) +U_1^+(t)+U_2^+(t) -\alpha \hat{V}(t)\ge 0, \ 
\text{(resp.\ $\le 0$).}
\]
Also, recalling 
\begin{align*}
-\pi^r(t) B_r(t)
=&\pi(t)^\top \text{diag}(S(t)){\bf 1} \\
=&\left\{Z^+(t)^\top + U_1^+(t) \sigma_I(t)+U_2^+(t) \sigma_C(t)\right\}
\sigma(t)^{-1} {\bf 1},
\end{align*}
we see that
$\pi^r(t) \ge 0$ (resp.\ $\le 0$) is equivalent to
\[
\left\{Z^+(t)^\top + U_1^+(t) \sigma_I(t)+U_2^+(t) \sigma_C(t)\right\}
\sigma(t)^{-1} {\bf 1}\le 0 
\ \text{(resp.\ $\ge 0$)}.
\]
Using these relations, 
we then rewrite (27) as
\begin{multline*}
 dY^+(t) = -f^+\left(t,Y^+(t),Z^+(t),U_1^+(t),U_2^+(t);\hat{V}(t)\right)dt \\
+Z^+(t)^\top dW(t)
+U_1^+(t) dM_1(t)
+U_2^+(t) dM_2(t),
\end{multline*}
where 
\begin{align}
f^+\left(t,y,z,u_1,u_2;\hat{v}\right) 
:=&f^0\left( t,y,z,u_1,u_2\right) 
+\alpha\left\{r_f^0(t)\hat{v}
-r_{col}^+(t) \hat{v}^++r_{col}^{-}(t) \hat{v}^-\right\} 
\nonumber \\
&+\epsilon_f(t) \left| y+ u_1+u_2-\alpha\hat{v}\right| 
\nonumber \\
&+\epsilon_r(t) \left| 
\left\{ z^\top +u_1 \sigma_I(t) +u_2\sigma_C(t)\right\}
\sigma(t)^{-1} {\bf 1}\right|, 
\end{align}
with 
\begin{align}
&f^0\bigl( t,y,z,u_1,u_2\bigr) 
:= -r_f^0(t)y
+\left\{ r_r^0(t)-r_D(t)\right\}
z^\top \sigma(t)^{-1}{\bf 1}
%-r_f^0(t) \left( y-\alpha \hat{v}\right) 
\nonumber \\
&+\left[ -\left\{ r_f^0(t)-r_D(t)\right\}
+\left\{ r_r^0(t)-r_D(t)\right\}\sigma_I(t)\sigma(t)^{-1} {\bf 1}
\right]u_1 \nonumber \\
&+\left[ -\left\{ r_f^0(t)-r_D(t)\right\}
+\left\{ r_r^0(t)-r_D(t)\right\}\sigma_C(t)\sigma(t)^{-1} {\bf 1}
\right]u_2. 
\end{align}
So, we consider the BSDE
on the filtered probability space
$(\Omega,{\mathcal F},{\mathbb P},{\mathbb G})$, namely 
\begin{equation}
\begin{split}
-dY^+(t) =&f^+\left(t,Y^+(t),Z^+(t),U_1^+(t),U_2^+(t);\hat{V}(t)\right)dt \\
&-Z^+(t)^\top dW(t) -U_1^+(t) dM_1(t) -U_2^+(t) dM_2(t) \\
&\qquad\text{for}\quad 0\le t\le \tau_1\wedge \tau_2 \wedge T, \\
Y^+\left(\tau_1 \wedge \tau_2 \wedge T\right)=&H.
\end{split}
\end{equation}
Using the solution to (37), 
the replicating portfolio 
$\left(\hat{y},\hat{\Pi}\right)$ that satisfies (29) 
is constructed as
\begin{align*}
\hat{y}:=&Y^+(0), \\
\hat{\pi}(t):=& 
\text{diag}(S_t)^{-1} 
\left( \sigma(t)^\top\right)^{-1}
\left\{
Z^+(t)+U_1^+(t)\sigma_I^\top(t)+U_2^+(t)\sigma_C^\top(t)
\right\}, \\
\hat{\pi}^I(t):=&-{P_I(t-)^{-1}}{U_1^+(t)}, \\
\hat{\pi}^C(t):=&-{P_C(t-)^{-1}}{U_2^+(t)}
\end{align*}
for $0\le t\le \tau_1\wedge \tau_2 \wedge T$.
Similarly, 
the replicating portfolio $(-\tilde{y},-\tilde{\Pi})$ 
that satisfies (34) 
can be represented using the solution to a BSDE.
Let 
\begin{align*}
Y^-:\equiv& -\tilde{Y}^{(-\tilde{y},-\tilde{\Pi})}, \\
U_1^-(t):=& -\tilde{\pi}^I(t) P_I({t-}), \\
U_2^-(t):=& -\tilde{\pi}^C(t) P_C({t-}), \\
Z^-(t):=&
\sigma(t)^\top \text{diag}\left(S(t)\right)\tilde{\pi}(t) 
-U_1^-(t) \sigma_I(t)^\top 
-U_2^-(t) \sigma_C(t)^\top.
\end{align*}
Recalling
\[
-\tilde{\pi}^f(t) B_f(t) 
=\tilde{Y}^-(t) +U_1^-(t)+U_2^-(t) -\alpha \hat{V}(t),
\]
we see that 
$\pi^f(t) \ge 0$ (resp.\ $\le 0$) is equivalent to
\[
 Y^-(t) +U_1^-(t)+U_2^-(t) -\alpha \hat{V}(t)\le 0, \ 
\text{(resp.\ $\ge 0$).}
\]
Also, recalling 
\begin{align*}
\tilde{\pi}^r(t) B_r(t)
=&\tilde{\pi}(t)^\top \text{diag}(S(t)){\bf 1} \\
=&\left\{Z^-(t)^\top + U_1^-(t) \sigma_I(t)+U_2^-(t) \sigma_C(t)\right\}
\sigma(t)^{-1} {\bf 1},
\end{align*}
we see that
$\pi^r(t) \ge 0$ (resp.\ $\le 0$) is equivalent to
\[
\left\{Z^-(t)^\top + U_1^-(t) \sigma_I(t)+U_2^-(t) \sigma_C(t)\right\}
\sigma(t)^{-1} {\bf 1}\ge 0 
\ \text{(resp.\ $\le 0$)}.
\]
Using these relations, 
we then rewrite (33) as
\begin{multline*}
 dY^-(t) = -f^-\left(t,Y^-(t),Z^-(t),U_1^-(t),U_2^-(t);\hat{V}(t)\right)dt \\
+Z^-(t)^\top dW(t)
+U_1^-(t) dM_1(t)
+U_2^-(t) dM_2(t),
\end{multline*}
where 
\begin{align}
f^-\left(t,y,z,u_1,u_2;\hat{v}\right) 
:=&-f^+\left( t,-y,-z,-u_1,-u_2;-\hat{v}\right) 
\nonumber \\
=&f^0\left( t,y,z,u_1,u_2\right) 
+\alpha\left\{r_f^0(t)\hat{v}
+r_{col}^+(t) \hat{v}^--r_{col}^{-}(t) \hat{v}^+\right\} 
\nonumber \\
&-\epsilon_f(t) \left| y+ u_1+u_2-\alpha\hat{v}\right| 
\nonumber \\
&-\epsilon_r(t) \left| 
\left\{ z^\top +u_1 \sigma_I(t) +u_2\sigma_C(t)\right\}
\sigma(t)^{-1} {\bf 1}\right|.
\end{align}
So, we consider the BSDE
on the filtered probability space
$(\Omega,{\mathcal F},{\mathbb P},{\mathbb G})$
\begin{equation}
\begin{split}
-dY^-(t) =&f^-\left(t,Y^-(t),Z^-(t),U_1^-(t),U_2^-(t);\hat{V}(t)\right)dt \\
&-Z^-(t)^\top dW(t) -U_1^-(t) dM_1(t) -U_2^-(t) dM_2(t) \\
&\qquad\text{for}\quad 0\le t\le \tau_1\wedge \tau_2 \wedge T, \\
Y^-\left(\tau_1 \wedge \tau_2 \wedge T\right)=&H.
\end{split}
\end{equation}
The replicating portfolio 
$\left(-\tilde{y},-\tilde{\Pi}\right)$ that satisfies (34) 
is now constructed as
\begin{align*}
\tilde{y}:=&Y^-(0), \\
\tilde{\pi}(t):=& 
\text{diag}(S_t)^{-1} 
\left( \sigma(t)^\top\right)^{-1}
\left\{
Z^-(t)+U_1^-(t)\sigma_I^\top(t)+U_2^-(t)\sigma_C^\top(t)
\right\}, \\
\tilde{\pi}^I(t):=&-{P_I(t-)^{-1}}{U_1^-(t)}, \\
\tilde{\pi}^C(t):=&-{P_C(t-)^{-1}}{U_2^-(t)}
\end{align*}
for $0\le t\le \tau_1\wedge \tau_2 \wedge T$, 
using the solution to (39).

\begin{rem}
BSDEs (37) and (39) with (15) and (16) can be seen as the system of BSDEs 
\begin{equation}
\begin{split}
-dY^{\pm}(t) =&f^{\pm}\left(t,Y^{\pm}(t),
Z^{\pm}(t),U_1^{\pm}(t),U_2^{\pm}(t);\hat{V}(t)\right)dt \\
&-Z^{\pm}(t)^\top dW(t) -U_1^{\pm}(t) dM_1(t) -U_2^{\pm}(t) dM_2(t), \\
&\qquad\text{for}\quad 0\le t\le \tau_1\wedge \tau_2 \wedge T, \\
%&\qquad\qquad 0\le t\le \tau_1\wedge \tau_2 \wedge T, \\
Y^{\pm}\left(\tau_1 \wedge \tau_2 \wedge T\right)=&H, \\
-d\hat{V}(t)=& -r_D(t) \hat{V}(t)dt -\Delta(t)^\top dW(t)
\quad\text{for}\quad 0\le t\le T, \\
\hat{V}(T)=&\xi_T,
\end{split}
\end{equation}
in which $\left(Y^{\pm},Z^{\pm},
U_1^{\pm},U_2^{\pm},\hat{V},\Delta \right)$ are solutions.
\end{rem}

\subsection{Hedging Problem}

To study the hedging problem via BSDEs (37) and (39) with (15) and (16), 
it is natural to employ the following 
space of admissible hedging strategies
\[
{\mathscr A}_{\beta,T}:=\left\{ 
\left( \pi, \pi^I, \pi^C\right)\in 
\left(
{\mathcal P}_{{\mathbb G},T}\right)^{d+2}
\Bigm|
\left(
\sigma^\top \text{diag}(S) \pi, 
\pi^I P_{I}^-,
\pi^C P_{C}^-
\right)
\in {\mathbb H}_{\beta,T}^{2,n+2}
\right\},
\]
where $\beta>0$ is a fixed (sufficiently large) constant 
and we denote $P_i^-(t):=P_i(t-)$ for $t>0$ and $P_i^-(0):=P_i(0)$.
We then formulate the minimal superhedging price 
(i.e., the maximal price for the writer)
and the maximal subhedging price 
(i.e., the minimal price for the buyer)
as follows.
\begin{defi}
For the derivative security given in Definition~2, 
\[
\bar{p}:= \inf\left\{ y\in {\mathbb R}
\bigm| -H+Y^{(y,\Pi)}(\tau_1\wedge\tau_2 \wedge T)\ge 0
\ \text{for some $(y,\Pi)\in{\mathbb R}\times {\mathscr A}_{\beta,T}$}\right\}
\]
is called the minimal superhedging price, 
which is the maximal price of the writer (seller), and
\[
\underline{p}:= \sup\left\{ y\in {\mathbb R}
\bigm| H+\tilde{Y}^{(-y,-\Pi)}(\tau_1\wedge\tau_2 \wedge T)\ge 0
\ \text{for some $(y,\Pi)\in{\mathbb R}\times {\mathscr A}_{\beta,T}$}\right\}
\]
is called the maximal subhedging price, 
which is the minimal price of the buyer.
If there exists $\bar{\Pi}\in {\mathscr A}_{\beta,T}$ such that
\[
 -H+ Y^{\left(\bar{p},\bar{\Pi} \right)}(\tau_1\wedge\tau_2 \wedge T)\ge 0,
\]
then 
the pair $\left(\bar{p},\bar{\Pi} \right)$
is called the minimal superhedging strategy, and 
if there exists $\underline{\Pi}\in {\mathscr A}_{\beta,T}$ such that
\[
 H+
\tilde{Y}^{\left(-\underline{p},-\underline{\Pi} \right)}
(\tau_1\wedge\tau_2 \wedge T)\ge 0,
\]
then 
the pair $\left(-\underline{p},-\underline{\Pi} \right)$
is called the maximal subhedging strategy.
\end{defi}
Associated with the hedging problem, we give the following definition.
\begin{defi}
Consider the derivative security given in Definition~2.
Suppose that
a writer sells the derivative security with price $p\in {\mathbb R}$
at time $0$. If it holds that
\[
 -H + Y^{(p,\Pi)}(\tau_1\wedge\tau_2 \wedge T)\ge 0
%\]
%and that
%\[
\quad\text{and}\quad
{\mathbb P}\left(
 -H + Y^{(p,\Pi)}(\tau_1\wedge\tau_2 \wedge T)
> 0
\right)
>0
\]
for some $\Pi\in {\mathscr A}_{\beta,T}$, 
then we say that an arbitrage opportunity for the writer occurs. 
Similarly, suppose that
a buyer purchases the derivative security with price $p\in {\mathbb R}$
at time $0$. If it holds that
\[
 H + \tilde{Y}^{(-p,-\Pi)}(\tau_1\wedge\tau_2 \wedge T)\ge 0
%\]
%and that
%\[
\quad\text{and}\quad
{\mathbb P}\left(
 H + \tilde{Y}^{(-p,-\Pi)}(\tau_1\wedge\tau_2 \wedge T)
> 0
\right)
>0
\]
for some $\Pi\in {\mathscr A}_{\beta,T}$, 
then we say that an arbitrage opportunity for the buyer occurs. 
Moreover, if the price $\hat{p}\in {\mathbb R}$ at time $0$ 
does not admit arbitrage opportunities for both writer and buyer, 
then $\hat{p}$ is called an arbitrage-free price.
\end{defi}

\begin{rem}
In our financial market model,  we assume implicitly that 
the probability measure $\mathbb P$ is an EMM. 
Hence, ${\mathbb P}\sim {\mathbb P}_0$, where 
${\mathbb P}_0$ is a real-world (physical) probability measure
given in the same measurable space $(\Omega,{\mathcal F})$.
Therefore, in Definition~3, the $\mathbb P$-a.s. statement can be replaced by 
the ${\mathbb P}_0$-a.s. statement.
Also, in Definition~4, $\mathbb P$ can be replaced by ${\mathbb P}_0$
to claim that 
${\mathbb P}_0\left( \cdots \right)>0$.
%\[
%{\mathbb P}_0\left(
% -H + Y^{(p,\Pi)}(\tau_1\wedge\tau_2 \wedge T)
%> 0
%\right)
%>0,
%\]
%for example.
\end{rem}

\subsection{Markovian Model}

The following Markovian model is typical and popularly treated in practice.
Let the coefficients of the market model be described as
\begin{align*}
\sigma(t):=&\tilde{\sigma}\left( t,F(t)\right), \quad
r_D(t):=\tilde{r}_D\left( t,F(t)\right), \\
\sigma_i(t):=&\tilde{\sigma}_j \left( t,F(t)\right), \quad
i\in\{ I,C\}, \\
h_j(t):=&\tilde{h}_i\left( t,F(t)\right), \quad
j\in\{ 1,2\}, \\
r_k^{0}(t):=& \tilde{r}_{k}^{0}\left( t,F(t)\right), \quad
\epsilon_k(t):= \tilde{\epsilon}_{k}\left( t,F(t)\right), \quad
k\in\{ f,r\}, \\
\text{and}\quad
r_{col}^{\pm}(t):=& \tilde{r}_{col}^{\pm}\left( t,F(t)\right), 
\end{align*}
where
$\tilde{\sigma}:[0,T]\times {\mathbb R}^m \to {\mathbb R}^{n\times n}$,
$\tilde{r}_D,\tilde{\sigma}_i,\tilde{h}_j, \tilde{r}_{k}^0,
\tilde{\epsilon}_{k}, \tilde{r}_{col}^\pm:
[0,T]\times {\mathbb R}^m \to {\mathbb R}$,
and $(F(t))_{t\in [0,T]}$ is called the stochastic factor process, 
which can be interpreted as a model of economic factors and 
affects the market model
through the coefficients 
$\sigma,\sigma_i$ ($i\in\{I,C\}$), 
$h_j$ ($j=1,2$), 
$r_k^0, \epsilon_k^0$ ($k\in\{ f,r\}$), 
and $r_{col}^\pm$. 
It is given by the solution to the SDE 
\[
 dF(t)= \mu_F\left( t,F(t)\right)dt
+\sigma_F\left( t,F(t)\right)dW(t),
\quad 
F(0)\in {\mathbb R}^m
\]
on $(\Omega,{\mathcal F},{\mathbb P}, {\mathbb F})$, 
where 
$\mu_F:[0,T]\times {\mathbb R}^m \to {\mathbb R}^m$ and
$\sigma_F:[0,T]\times {\mathbb R}^m \to {\mathbb R}^{m\times n}$.
Let
\[
 X^\top:\equiv \left( X_1^\top, X_2^\top\right)
:\equiv \left( S^\top, F^\top\right)
\]
and define, for $x:=(x_1,x_2)\in {\mathbb R}^n \times {\mathbb R}^m$, 
\[
b(t,x):=
\begin{pmatrix}
\text{diag}(x_1) r_D(t,x_2) \\
\mu_F(t,x_2)
\end{pmatrix},
\quad
a(t,x):=
\begin{pmatrix}
\text{diag}(x_1)\sigma(t,x_2) \\
\sigma_F(t,x_2)
\end{pmatrix}.
\]
Then, the SDE for $X$ is written as (8) with $d=n+m$. 
Furthermore, we set 
\[
 \xi_T:=\Xi\left( X(T)\right)
\quad\text{and}\quad
 \phi_i(t):=\varphi_i( \hat{V}(t))
\quad \text{for $i\in\{1,2\}$,}
\]
where $\Xi:{\mathbb R}^{n+m} \to {\mathbb R}$
and $\varphi_i:{\mathbb R} \to {\mathbb R}$.
In this situation, we can apply Theorem~3 to represent 
the solution to BSDEs (40), 
using the solutions to the associated PDEs 
(see Proposition~2 in Section~4).

\section{Results}

Throughout this section, we always assume that
$\sigma_i$ ($i\in\{I,C\}$), $\sigma, \sigma^{-1}$, 
$r_D$, $r_{j}^\pm$ ($j\in\{ f,r,col\}$), and $h_k$ ($k=1,2$)
are bounded. 
Applying the results in Section~2 and a comparison theorem for BSDEs, 
the following claims are straightforward to see.
\begin{prop}
For any sufficiently large $\beta>0$, 
there exist unique solutions
$\left( Y^{\pm}, Z^{\pm}, U_1^{\pm}, U_2^{\pm}\right)
\in {\mathbb S}^2_{\beta,T} \times {\mathbb H}^{2,n+2}_{\beta,T}$
to BSDEs (37) and (39) with (15) and (16).
Moreover, the solutions have the representations that
\begin{equation}
\begin{split}
Y^\pm(t)=&
\bar{Y}^\pm(t)1_{\{ 0\le t<\tau_1\wedge\tau_2\wedge T \}} \\ 
+ \Bigl\{
\phi_1&({\tau_1}) 1_{\{ \tau_1<\tau_2\wedge T\}}
+\phi_2({\tau_2}) 1_{\{ \tau_2<\tau_1\wedge T\}}
+\xi_T 1_{\{ T<\tau_1 \wedge \tau_2\}}
\Bigr\}
1_{\{ t=\tau_1 \wedge \tau_2\wedge T\}}, \\
Z^\pm(t)=& \bar{Z}^\pm(t), \\
U_i^\pm(t)=& \phi_i(t)-\bar{Y}^\pm(t), \quad i=1,2.
\end{split}
\end{equation}
Here,  
$\left( \bar{Y}^\pm,\bar{Z}^\pm\right)
\in {\mathbb S}^2_{\beta,T} \times {\mathbb H}^{2,n}_{\beta,T}$ 
are the solutions to BSDEs
on $(\Omega,{\mathcal F},{\mathbb P}, {\mathbb F})$, namely 
\begin{equation}
\begin{split}
-d\bar{Y}^\pm(t) =&\bar{f}^\pm\left(t,\bar{Y}^\pm(t),\bar{Z}^\pm(t)
;\hat{V}(t), \phi_1(t),\phi_2(t)\right)dt 
-\bar{Z}^\pm(t)^\top dW(t) \\
&\qquad\text{for}\quad 0\le t\le T, \\
\bar{Y}^\pm(T)=&\xi_T,  \\
-d\hat{V}(t)=& -r_D(t) \hat{V}(t)dt -\Delta(t)^\top dW(t)
\quad\text{for}\quad 0\le t\le T, \\
\hat{V}(T)=&\xi_T,
\end{split}
\end{equation}
where we define
\begin{align}
\bar{f}^\pm\left( t,y,z; \hat{v},p_1,p_2\right) 
:=f^\pm\left( t,y,z, p_1-y,p_2-y; \hat{v}\right)
+(p_1-y)h_1(t)+(p_2-y)h_2(t).
\end{align}
In addition to Condition~(19), assume that
\begin{equation}
 r_{col}^-\ge r_{col}^+. 
\end{equation}
Then, it always holds that 
\begin{equation}
 Y^-\le Y^+
\quad\text{and}\quad
\bar{Y}^-\le \bar{Y}^+.
\end{equation}
\end{prop}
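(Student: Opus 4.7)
The plan is to dismantle the statement into three pieces, each reducible to a tool already set up in Section~2. First, for existence and uniqueness, I would apply Theorem~1 directly to each of BSDEs (37) and (39), verifying that the drivers $f^\pm(\cdot;\hat{V}(\cdot))$ satisfy Assumption~1 on the progressively enlarged filtration. The linear part $f^0$ in (36) involves only bounded coefficients ($r_D$, $r_f^0$, $r_r^0$, $\sigma_I$, $\sigma_C$, $\sigma^{-1}$), so it is Lipschitz in $(y,z,u_1,u_2)$; the $\alpha\{\cdots\}$ block in (35)/(38) does not involve $(y,z,u_1,u_2)$ at all; and the two absolute-value blocks are $1$-Lipschitz in their arguments, which are affine in $(y,z,u_1,u_2)$ with bounded multipliers. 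Square integrability of $f^\pm(t,0,0,0,0;\hat{V}(t))$ follows from square integrability of $\hat{V}$, which is classical for (16) once $\xi_T\in L^2$.

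Next, for the representation (41), I would invoke Theorem~2 with $f:\equiv f^\pm(\cdot\,;\hat{V}(\cdot))$ and the $\mathbb{F}$-optional payoffs $\phi_1,\phi_2$ supplied by Definition~2. The boundedness of $h_1,h_2$ needed for Assumption~2 is part of the standing hypothesis of Section~4. Theorem~2 gives exactly the claimed piecewise decomposition of $Y^\pm$, together with $Z^\pm=\bar Z^\pm$ and $U_i^\pm=\phi_i-\bar Y^\pm$, provided $(\bar Y^\pm,\bar Z^\pm)$ solves the reduced BSDE (42) on $(\Omega,\mathcal F,\mathbb P,\mathbb F)$ with driver
\[
\bar f^\pm(t,y,z)=f^\pm\bigl(t,y,z,\phi_1(t)-y,\phi_2(t)-y;\hat V(t)\bigr)+\{\phi_1(t)-y\}h_1(t)+\{\phi_2(t)-y\}h_2(t),
\]
which is precisely (43). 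Existence and uniqueness for (42) follow from the standard Brownian BSDE theory since $\bar f^\pm$ remains Lipschitz in $(y,z)$ with bounded-coefficient ingredients.

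For the comparison statement, the key observation is the pointwise inequality $\bar f^+\ge\bar f^-$ under (19) and (44). Subtracting (38) from (35) and using $\hat v^++\hat v^-=|\hat v|$ gives, for every $(t,y,z,u_1,u_2,\hat v)$,
\begin{align*}
f^+-f^-=\alpha\bigl(r_{col}^-(t)-r_{col}^+(t)\bigr)|\hat v|+2\epsilon_f(t)\bigl|y+u_1+u_2-\alpha\hat v\bigr|+2\epsilon_r(t)\bigl|\{z^\top+u_1\sigma_I(t)+u_2\sigma_C(t)\}\sigma(t)^{-1}\mathbf{1}\bigr|,
\end{align*}
which is nonnegative exactly under (19) and (44). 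Since $\bar f^\pm-f^\pm(\cdot,y,z,\phi_1-y,\phi_2-y;\hat v)$ is the same linear expression in $(y,\phi_1,\phi_2)$ for both signs, the same inequality $\bar f^+\ge\bar f^-$ propagates to the reduced drivers. With identical terminal condition $\xi_T$ and Lipschitz drivers, the standard comparison theorem for Brownian BSDEs (e.g.\ El Karoui et al., 2000) then yields $\bar Y^-\le\bar Y^+$, and the representation (41) upgrades this to $Y^-\le Y^+$ on $\{t<\tau_1\wedge\tau_2\wedge T\}$ while both sides coincide on the complementary event.

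The main friction I anticipate is bookkeeping the sign in the driver comparison: one must be careful that the nonlinearities $|y+u_1+u_2-\alpha\hat v|$ and $|\{z^\top+u_1\sigma_I+u_2\sigma_C\}\sigma^{-1}\mathbf{1}|$ enter $f^+$ with a $+\epsilon$ and $f^-$ with a $-\epsilon$, so they contribute with a factor $2\epsilon$ after subtraction rather than cancelling; and that the $\hat v^\pm$-terms recombine into $|\hat v|$ only because (44) has the right sign. Everything else is a routine application of the Section~2 machinery.
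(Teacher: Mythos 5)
Your proposal is correct and follows essentially the same route as the paper: the paper's proof likewise computes the driver difference, obtaining $\bar f^+-\bar f^-=\alpha\{r_{col}^--r_{col}^+\}|\hat v|+2\epsilon_f|\cdot|+2\epsilon_r|\cdot|\ge 0$ under (19) and (44), applies the comparison theorem for BSDEs to get (45), and disposes of existence, uniqueness, and the representation (41) by citing the Section~2 results (Theorems~1 and~2), exactly as you do. Your only deviations are cosmetic --- you verify Assumption~1 for $f^\pm(\cdot;\hat V(\cdot))$ explicitly and compute the difference at the level of $f^\pm$ before noting the common linear terms $(p_1-y)h_1+(p_2-y)h_2$ cancel, details the paper leaves implicit.
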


\begin{proof}[Sketch]
Using (19) and (44), we see that
\begin{align*}
&\bar{f}^+\left( t,y,z; \hat{v},p_1,p_2\right) 
-\bar{f}^-\left( t,y,z; \hat{v},p_1,p_2\right) \\
=&\alpha\left\{ r_{col}^-(t)-r_{col}^+(t)\right\} |\hat{v}|
+ 2\epsilon_f(t) \left| y+ (p_1-y)+ (p_2-y)-\alpha \hat{v}\right| \\
&+ 2\epsilon_r(t) \left|
\left\{  z^\top +(p_1-y)\sigma_I(t) +(p_2-y)\sigma_C(t)\right\}
\sigma(t)^{-1}{\bf 1}\right|\ge 0.
\end{align*}
Hence, (45) follows from a comparison theorem of BSDEs. 
Other assertions follow from the results in Section~2.
\end{proof}
Next, consider the Markovian model given in Subsection~3.6. 
%then, the solution to BSDEs (40) can be represented, 
%using the solutions to a system of PDEs, as follows:
Then, corresponding to (42), we have the Markovian system of BSDEs 
(decoupled FBSDEs) 
\begin{equation}
\begin{split}
dX(t)=&b(t,X(t))dt + a(t,X(t))dW(t),
\quad X(0)\in {\mathbb R}^{n+m}, \\
-d\bar{Y}^{\pm}(t) =&\bar{g}^{\pm}\left(t,X_2(t), \bar{Y}^{\pm}(t),
\bar{Z}^{\pm}(t); \hat{V}(t), \varphi_1(\hat{V}(t)), 
\varphi_2(\hat{V}(t))\right)dt \\
&-\bar{Z}^{\pm}(t)^\top dW(t), \\
\bar{Y}^{\pm}(T)=&\Xi\left( X(T)\right), \\
-d\hat{V}(t)=& -\tilde{r}_D(t,X_2(t)) \hat{V}(t)dt -\Delta(t)^\top
 dW(t), \\
\hat{V}(T)=&\Xi\left( X(T)\right).
\end{split}
\end{equation}
Here, the relation 
\[
\bar{g}^{\pm}(t,X_2(t,\omega),y,z;\hat{v},p_1,p_2)
=\bar{f}^{\pm}(t,\omega,y,z;\hat{v},p_1,p_2)
\]
holds, and the functions 
$\bar{g}^\pm:[0,T]\times {\mathbb R}^m \times 
{\mathbb R}\times {\mathbb R}^n \times {\mathbb R}^3 
\to {\mathbb R}$ are written as
\begin{align*}
&\bar{g}^\pm(t,x_2,y,z;\hat{v},p_1,p_2)
:=\bar{g}^0 (t,x_2,y,z;p_1,p_2) \\
&+\alpha\left\{ 
\tilde{r}_f^0(t,x_2) \hat{v} 
\mp \tilde{r}_{col}^+(t,x_2)\hat{v}^\pm 
\pm \tilde{r}_{col}^-(t,x_2) \hat{v}^\mp
\right\} \\
&\pm\tilde{\epsilon}_f(t,x_2)\left|y+(p_1-y)+(p_2-y)-\alpha\hat{v}\right| \\
&\pm\tilde{\epsilon}_r(t,x_2)
\left| 
\left\{ 
z^\top +(p_1-y)\tilde{\sigma}_I(t,x_2)
+(p_2-y)\tilde{\sigma}_C(t,x_2)
\right\}
\tilde{\sigma}(t,x_2)^{-1} {\bf 1}
\right|
\end{align*}
with
\begin{align*}
&\bar{g}^0 (t,x_2,y,z;p_1,p_2)  
:=z^\top \left\{ (\tilde{r}_r^0-\tilde{r}_D)
\tilde{\sigma}^{-1} {\bf 1} \right\}(t,x_2) \\
&-\left\{ (2\tilde{r}_D-\tilde{r}_f^0+\tilde{h}_1+\tilde{h}_2)
+(\tilde{r}_r^0-\tilde{r}_D)
(\tilde{\sigma}_I+\tilde{\sigma}_C)
\tilde{\sigma}^{-1}{\bf 1}\right\}(t,x_2) y \\
&+\left\{\tilde{h}_1-(\tilde{r}_f^0-\tilde{r}_D)
+(\tilde{r}_r^0-\tilde{r}_D)\tilde{\sigma}_I
\tilde{\sigma}^{-1}{\bf 1}\right\}(t,x_2)p_1 \\
&+\left\{ \tilde{h}_2-
(\tilde{r}_f^0-\tilde{r}_D)+(\tilde{r}_r^0-\tilde{r}_D)
\tilde{\sigma}_C\tilde{\sigma}^{-1}{\bf 1}\right\}(t,x_2)p_2.
\end{align*}
Utilizing Theorem~3, we obtain the following.
\begin{prop}
Denote $d:=n+m$ and 
consider the system of second-order parabolic semilinear PDEs 
\begin{equation}
\begin{split}
-\partial_t V=&
\left\{{\mathcal L}_t-\tilde{r}_D(t,x_2)
\right\}V, \quad
(t,x)\in [0,T)\times {\mathbb R}^d, \\
V(T,x)=&\Xi(x), \\
-\partial_t U^\pm=&{\mathcal L}_t U^\pm 
+ \bar{g}^\pm \bigl( t,x_2, U, a^\top \nabla U^\pm;
V, \varphi_1(V), \varphi_2(V)\bigr), \\
& (t,x)\in [0,T)\times {\mathbb R}^d, \\
U^\pm(T,x)=&\Xi(x),
\end{split}
\end{equation}
where ${\mathcal L}_t(\cdot)$ 
is the infinitesimal generator for $X$ given by (10).
Suppose that there exists a unique classical solution 
$\left(V,U^\pm\right)\in 
\left(C^{1,2}([0,T]\times {\mathbb R}^d)\right)^2$ to (47). 
Then the solution to BSDE (46) is represented as
\[
 \bar{Y}^\pm(t)=U^\pm\left( t,X(t)\right),
\quad
 \bar{Z}^\pm(t)=\left( a \nabla U^\pm\right)\left( t,X(t)\right),
\quad
t\in [0,T].
\]

\end{prop}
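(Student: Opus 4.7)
The strategy is a two-step reduction of the BSDE system (46) to the PDE system (47) via successive applications of Theorem~3.

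First, I would resolve the linear subsystem for $\hat{V}$. The equation for $\hat{V}$ is a linear BSDE with driver $(t,\hat{v})\mapsto -\tilde{r}_D(t,X_2(t))\hat{v}$ and terminal condition $\Xi(X(T))$, which falls under the Markovian setup of Subsection~2.3 as a special case (linear in $y$, independent of $z$). The first PDE in (47) is precisely the PDE (9) associated with this driver, so the existence of a unique classical solution $V\in C^{1,2}([0,T]\times {\mathbb R}^d)$—which is assumed—lets Theorem~3 deliver the representation $\hat{V}(t)=V(t,X(t))$ and $\Delta(t) = a(t,X(t))^\top \nabla V(t,X(t))$.

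Next, I would substitute this representation into the driver of the $\bar{Y}^\pm$-BSDE. Defining
\[
\tilde{g}^\pm(t,x,y,z) := \bar{g}^\pm\bigl(t,x_2,y,z;\, V(t,x),\, \varphi_1(V(t,x)),\, \varphi_2(V(t,x))\bigr),
\]
the equation for $\bar{Y}^\pm$ becomes a standard decoupled Markovian FBSDE with forward $X$ and driver $\tilde{g}^\pm(t,X(t),y,z)$. Uniform Lipschitz continuity of $\tilde{g}^\pm$ in $(y,z)$ follows from the standing boundedness of the market coefficients (imposed at the start of Section~4) together with the $1$-Lipschitz character of $|\cdot|$; the $x$-dependence through $V(t,x)$ and $\varphi_i(V(t,x))$ only enters as coefficients, not as Lipschitz arguments in $(y,z)$. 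Applying Theorem~3 a second time to this FBSDE, with associated PDE being the second equation of (47), yields the claimed representation $\bar{Y}^\pm(t)=U^\pm(t,X(t))$ and $\bar{Z}^\pm(t)=a(t,X(t))^\top \nabla U^\pm(t,X(t))$.

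The main technical point will be verifying the integrability hypothesis (iv) of Assumption~1 at the second step, i.e.\ that $\tilde{g}^\pm(t,X(t),0,0)$ lies in $L^2([0,T]\times \Omega)$. This reduces to square-integrability of $V(\cdot,X(\cdot))$ and $\varphi_i(V(\cdot,X(\cdot)))$ along the forward trajectory, which follows from the boundedness of the coefficients and from $\Xi(X(T))\in L^2$ (inherited via $\xi_T\in L^2$), provided $\varphi_i$ is of at most linear growth—a hypothesis consistent with the piecewise-linear forms in Remark~7. Once these integrability checks are in place, the uniqueness statement in Theorem~3 combined with the assumed uniqueness of the classical PDE solution makes the representation unambiguous.
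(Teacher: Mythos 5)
Your proposal is correct and follows essentially the same route as the paper, which proves Proposition~2 simply by invoking Theorem~3: one first represents the linear $\hat{V}$-equation as $\hat{V}(t)=V(t,X(t))$ and then treats the $\bar{Y}^{\pm}$-equation as a decoupled Markovian FBSDE whose driver carries $V(t,x)$ and $\varphi_i(V(t,x))$ as coefficients, exactly your two-step cascade. Your additional verifications (uniform Lipschitz continuity of the substituted driver and the $L^2$ condition of Assumption~1(iv), requiring linear growth of $\varphi_i$ consistent with Remark~7) flesh out details the paper leaves implicit but do not change the argument.
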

\subsection{Results on Arbitrage}

\begin{thm}
In addition to Conditions~(19) and (44), assume the following:
\begin{equation}
\begin{split}
h_1\ge& r_f^- - r_D 
-\left( r_r^+ -r_D\right)(\sigma_I\sigma^{-1}{\bf 1})^+ 
+\left( r_r^- -r_D\right)(\sigma_I\sigma^{-1}{\bf 1})^-, \\
h_2\ge& r_f^- - r_D 
-\left( r_r^+ -r_D\right)(\sigma_C\sigma^{-1}{\bf 1})^+ 
+\left( r_r^- -r_D\right)(\sigma_C\sigma^{-1}{\bf 1})^-,
\end{split}
\end{equation}
and 
\begin{equation}
r_f^+ \ge r_{col}^-.
\end{equation}
Then it holds that 
$\underline{p}=Y^-(0)\le Y^+(0)=\bar{p}$.
Hence, 
for the derivative security given in Definition~2, 
any price $p\in \left[ Y^-(0),Y^+(0)\right]$ at time $0$ is arbitrage-free.
\end{thm}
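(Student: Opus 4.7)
The plan is to identify $\bar{p}=Y^+(0)$ and $\underline{p}=Y^-(0)$, and then rule out arbitrage at every intermediate price via a strict BSDE comparison argument. The inequality $Y^-(0)\le Y^+(0)$ itself has already been supplied by Proposition~1 under Conditions~(19) and (44), so the $\bar{p}\le Y^+(0)$ and $\underline{p}\ge Y^-(0)$ directions reduce to exhibiting the replicating portfolios $(\hat{y},\hat{\Pi})$ and $(-\tilde{y},-\tilde{\Pi})$ constructed from $Y^\pm$ in Subsection~3.4, which are admissible in $\mathscr{A}_{\beta,T}$ and exactly replicate $H$ and $-H$ respectively.

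The substantive content lies in the reverse inequalities. Given any $(y,\Pi)\in\mathbb{R}\times\mathscr{A}_{\beta,T}$ with $Y^{(y,\Pi)}(\tau_1\wedge\tau_2\wedge T)\ge H$, I would introduce $(Z,U_1,U_2)$ from $\Pi$ exactly as in the derivation of BSDE~(37) and rewrite (27) on $[0,\tau_1\wedge\tau_2\wedge T]$ as a BSDE with some driver $\tilde{f}_\Pi$. By construction of $f^+$, whose absolute-value terms $\epsilon_f|\,y+u_1+u_2-\alpha\hat{v}\,|$ and $\epsilon_r|(z^\top+u_1\sigma_I+u_2\sigma_C)\sigma^{-1}\mathbf{1}|$ and the split $-r_{col}^+\hat{v}^++r_{col}^-\hat{v}^-$ encode the worst-case signs of $\pi^f,\pi^r,\pi^{col}$, the pointwise inequality $\tilde{f}_\Pi(t,y,z,u_1,u_2;\hat{v})\le f^+(t,y,z,u_1,u_2;\hat{v})$ holds. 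Reducing both BSDEs to the Brownian filtration via Theorem~2 turns this into a standard Peng comparison between $\bar{Y}^{(y,\Pi)}$ and $\bar{Y}^+$ with Lipschitz drivers $\tilde{\bar{f}}_\Pi\le\bar{f}^+$, which yields $y\ge Y^+(0)$ and hence $\bar{p}\ge Y^+(0)$. The buyer side is entirely symmetric using $f^-$, $\bar{f}^-$ and BSDE~(39), with the inequalities reversed.

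For the arbitrage-free property, fix $p\in[Y^-(0),Y^+(0)]$ and suppose, for contradiction, that an admissible $\Pi$ makes $-H+Y^{(p,\Pi)}(\tau_1\wedge\tau_2\wedge T)\ge 0$ almost surely with strict positivity on a set of positive probability. The comparison of the previous step forces $p\ge Y^+(0)$, hence $p=Y^+(0)$; then the strict form of the comparison theorem, equivalently the uniqueness part of Theorem~1 applied with terminal $H$, upgrades the pathwise inequality to an almost-sure equality and contradicts the strict-positivity assumption. Buyer arbitrage is excluded by the same reasoning applied to $\tilde{Y}^{(-p,-\Pi)}$ and $Y^-$.

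The main obstacle will be verifying that the driver comparison survives the reduction to the Brownian filtration in the presence of sign-dependent rates, and this is precisely where Conditions~(48) and (49) enter. After plugging the worst-case signs into the absolute-value pieces of $f^+$ and adding the reduction correction $(p_1-y)h_1+(p_2-y)h_2$ from the definition of $\bar{f}^+$ in (43), Condition~(48) is exactly the non-negativity of the resulting coefficients of $p_1-y$ and $p_2-y$, while (49) controls the sign of the coefficient coming from the $\alpha$-collateral/funding bracket. Without these, the reduced driver $\bar{f}^+$ fails to dominate $\tilde{\bar{f}}_\Pi$ uniformly over admissible sign configurations, the comparison chain breaks, and only the narrower setting discussed in Remark~14 would survive.
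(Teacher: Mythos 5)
Your overall skeleton (replication gives one inequality, reduction via Theorem~2 plus a comparison theorem gives the other, with Conditions~(48)--(49) appearing as nonnegativity of coefficients after reduction) matches the paper's, and your final paragraph correctly locates where (48) and (49) enter. But the pivotal step is miswired. First, the claimed inequality $\tilde f_\Pi\le f^+$ is not a ``worst-case over signs'' bound: because (24)--(26) pin $\pi^r,\pi^{col},\pi^f$ down in terms of $\bigl(Y,Z,U_1,U_2,\hat V\bigr)$, the sign-dependent rate terms are \emph{identities} --- for instance $\pi^f(t) B_f(t)\, r_f\bigl(t;\pi^f(t)\bigr)=r_f^0(t)\bigl(y+u_1+u_2-\alpha\hat v\bigr)-\epsilon_f(t)\bigl|y+u_1+u_2-\alpha\hat v\bigr|$ exactly, and likewise for the repo and collateral terms --- so \emph{every} admissible wealth process solves the BSDE with driver exactly $f^+$, carrying its own terminal value; there is no strategy-dependent driver $\tilde f_\Pi$ strictly below $f^+$ to dominate. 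Second, even granting your inequality, its direction is unusable: a process with driver $\le f^+$ and terminal value $\ge H$ is a subsolution with a \emph{larger} terminal datum, and the comparison theorem yields no ordering when the driver and terminal dominances point in opposite directions; as written, your step cannot produce $y\ge Y^+(0)$.

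The mechanism the paper actually uses is different: the superhedge's terminal value $H'\ge H$ is a general $\mathcal G_{\tau_1\wedge\tau_2\wedge T}$-measurable variable, which must first be decomposed via Lemma~1 into a payoff triple $\bigl(\tilde\xi_T,\tilde\phi_1,\tilde\phi_2\bigr)$ dominating $\bigl(\xi_T,\phi_1,\phi_2\bigr)$ --- a step absent from your proposal, and without which ``reducing both BSDEs via Theorem~2'' is not justified for an arbitrary admissible strategy, since the reduction requires terminal data of the structured form. After reduction one compares the \emph{same} driver map $\bar f^+$ evaluated at ordered payoff arguments, so what is needed is monotonicity of $\bar f^+$ in $\bigl(\hat v,p_1,p_2\bigr)$: this is exactly the computation (50)--(51), where the inequality $|x+y|-|x|\ge -|y|$ together with (48) makes the coefficients of $\delta_1,\delta_2$ nonnegative and (49) makes the term $\alpha\bigl(r_f^+-r_{col}^-\bigr)\delta_0$ nonnegative, and the comparison theorem is applied twice (once for the linear $\hat V$-equation, once for $\bar Y^\pm$). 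Your strict-comparison argument for ruling out arbitrage at interior prices is fine in outline, but it rests entirely on the comparison step above, so the gap propagates to it as well.
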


\begin{rem}
The conditions imposed in Theorem~4
to ensure the arbitrage-free property look to be rather strong: 
violating (44), (48), or (49) seems to be realizable 
in real situations. 
Relaxing the arbitrage-free condition
by admitting ``certain'' arbitrage opportunities 
might be an interesting research direction 
for this bilateral hedging scheme with collateralizations. 
We refer to Thoednithi (2015)
and Nie and Rutkowski (2018) as related studies.
\end{rem}
\begin{proof}[Sketch]
Using (35), (36), (38), and (43), we see that
\begin{align*}
&\bar{f}^\pm\left( t,y,z; \hat{v},p_1,p_2\right) \\
%=&f^0\left( t,y,z, p_1-y,p_2-y\right)
%+(p_1-y)h_1(t)+(p_2-y)h_2(t) \\
%&+\alpha\left\{ 
%r_f^0(t) \hat{v} \mp r_{col}^+(t)\hat{v}^\pm \pm r_{col}^-(t) \hat{v}^\mp
%\right\} \\
%&\pm \epsilon_f(t) \left| y+ (p_1-y)+ (p_2-y)-\alpha \hat{v}\right| \\
%&\pm \epsilon_r(t) \left|
%\left\{  z^\top +(p_1-y)\sigma_I(t) +(p_2-y)\sigma_C(t)\right\}
%\sigma(t)^{-1}{\bf 1}\right| \\
=&z^\top \left\{ (r_r^0-r_D)\sigma^{-1} {\bf 1} \right\}(t) \\
&-\left\{ (2r_D-r_f^0+h_1+h_2)
+(r_r^0-r_D)(\sigma_I+\sigma_C)\sigma^{-1}{\bf 1}\right\}(t) y \\
&+\left\{h_1-(r_f^0-r_D)+(r_r^0-r_D)\sigma_I\sigma^{-1}{\bf
 1}\right\}(t)p_1 \\
&+\left\{h_2-(r_f^0-r_D)+(r_r^0-r_D)\sigma_C\sigma^{-1}{\bf
 1}\right\}(t)p_2 \\
&+\alpha\left\{ 
r_f^0(t) \hat{v} \mp r_{col}^+(t)\hat{v}^\pm \pm r_{col}^-(t) \hat{v}^\mp
\right\} \\
&\pm \epsilon_f(t) \left| y+ (p_1-y)+ (p_2-y)-\alpha \hat{v}\right| \\
&\pm \epsilon_r(t) \left|
\left\{  z^\top +(p_1-y)\sigma_I(t) +(p_2-y)\sigma_C(t)\right\}
\sigma(t)^{-1}{\bf 1}\right|. 
\end{align*}
So, for $\delta_0,\delta_1,\delta_2\ge 0$, we see that
\begin{align}
&\bar{f}^+\left( \cdot,y,z; \hat{v}+\delta_0,p_1+\delta_1,p_2+\delta_2\right) 
-\bar{f}^+\left( \cdot,y,z; \hat{v},p_1,p_2\right) 
\nonumber \\
=&\left\{h_1-(r_f^0-r_D)+(r_r^0-r_D)\sigma_I\sigma^{-1}{\bf
 1}\right\}\delta_1 
\nonumber \\
&+\left\{h_2-(r_f^0-r_D)+(r_r^0-r_D)\sigma_C\sigma^{-1}{\bf
 1}\right\}\delta_2 
\nonumber \\
&+\alpha\left[
r_f^0 \delta_0 - r_{col}^+
\left\{ (\hat{v}+\delta_0)^+ -\hat{v}^+\right\} 
+ r_{col}^-
\left\{ (\hat{v}+\delta_0)^- -\hat{v}^-\right\} 
\right] 
\nonumber \\
&+ \epsilon_f
\left\{
\left| p_1+p_2 -\alpha\hat{v}-y +(\delta_1+\delta_2-\alpha\delta_0)\right| 
-\left| p_1+p_2 -\alpha\hat{v}-y \right| 
\right\}
\nonumber \\
&+ \epsilon_r
\Bigl[
\left|\left\{  z^\top +(p_1-y)\sigma_I +(p_2-y)\sigma_C\right\}
\sigma^{-1}{\bf 1}
+\left\{\delta_1 \sigma_I+\delta_2 \sigma_C\right\}\sigma^{-1}{\bf 1}
\right| 
\nonumber \\
&-\left|\left\{  z^\top +(p_1-y)\sigma_I +(p_2-y)\sigma_C\right\}
\sigma^{-1}{\bf 1}\right| 
\Bigr].
\end{align}
Using the inequality $|x+y|-|x|\ge -|y|$
and the relation
\[
r_{col}^+
\left\{ (\hat{v}+\delta_0)^+ -\hat{v}^+\right\} 
- r_{col}^-
\left\{ (\hat{v}+\delta_0)^- -\hat{v}^-\right\} 
\le
\left(  r_{col}^{+} \vee r_{col}^-\right)\delta_0, 
\]
we see that
\begin{align}
\text{(50)}\ge &
\left\{h_1-(r_f^0-r_D)+(r_r^0-r_D)\sigma_I\sigma^{-1}{\bf 1}\right\}\delta_1 
\nonumber \\
&+\left\{h_2-(r_f^0-r_D)+(r_r^0-r_D)\sigma_C\sigma^{-1}{\bf
 1}\right\}\delta_2 
+\alpha\left(r_f^0 - r_{col}^-\right)\delta_0
\nonumber \\
&- \epsilon_f \left( \delta_1+\delta_2+\alpha \delta_0 \right)
- \epsilon_r 
\left\{
|\sigma_I\sigma^{-1}{\bf 1}| \delta_1
+|\sigma_C\sigma^{-1}{\bf 1}| \delta_2
\right\} 
\nonumber \\
=&
\left\{h_1-r_f^-+r_D+(r_r^0-r_D)\sigma_I\sigma^{-1}{\bf 1}
-\epsilon_r|\sigma_I\sigma^{-1}{\bf 1}|
\right\}\delta_1 
\nonumber \\
+&\left\{h_2-r_f^-+r_D+(r_r^0-r_D)\sigma_C\sigma^{-1}{\bf 1}
-\epsilon_r|\sigma_C\sigma^{-1}{\bf 1}|
\right\}\delta_2 
\nonumber \\
+&\alpha\left(r_f^+-r_{col}^-\right)\delta_0 \ge 0,
\end{align}
where we use (48) and (49).
Consider the system of BSDEs (42) and write the solution as
\[
 \bar{Y}^{\pm}\left( t;\xi_T,\phi_1,\phi_2\right),
\quad
 \bar{Z}^{\pm}\left( t;\xi_T,\phi_1,\phi_2\right)
\quad t\in [0,T]
\]
by emphasizing the parameters
$\left({\xi}_T,\phi_1,{\phi}_2\right)$.
Take other payoff parameters \\
$\left( \tilde{\xi}_T, \tilde{\phi}_1, \tilde{\phi}_2\right)$
such that
$\tilde{\xi}_T\ge \xi_T$, 
$\tilde{\phi}_1\ge {\phi}_1$, and
$\tilde{\phi}_2\ge {\phi}_2$.
%\[
%\begin{split}
%-d\bar{Y}^+(t) =&\bar{f}^+\left(t,\bar{Y}^+(t),\bar{Z}^+(t)
%;\hat{V}(t), \tilde{\phi}_1(t),\tilde{\phi}_2(t)\right)dt 
%-\bar{Z}^+(t)^\top dW(t) \\
%&\qquad\text{for}\quad 0\le t\le T, \\
%\bar{Y}^+(T)=&\tilde{\xi}_T,  \\
%-d\hat{V}(t)=& -r_D(t) \hat{V}(t)dt -\Delta(t)^\top dW(t)
%\quad\text{for}\quad 0\le t\le T, \\
%\hat{V}(T)=&\tilde{\xi}_T.
%\end{split}
%\]
Using the comparison theorem for BSDEs twice
(for $\hat{V}$ and $\bar{Y}^+$), 
and using relations (50) and (51), we deduce that
\[
 \bar{Y}^+\left(\tilde{\xi}_T,\tilde{\phi}_1,\tilde{\phi}_2 \right)
\ge \bar{Y}^+\left({\xi}_T,{\phi}_1,{\phi}_2 \right)
\]
and that
\[
{Y}^+\left(\tilde{\xi}_T,\tilde{\phi}_1,\tilde{\phi}_2 \right)
\ge {Y}^+\left({\xi}_T,{\phi}_1,{\phi}_2 \right).
\]
This implies the minimality 
of $Y^+(\xi_T,\phi_1,\phi_2)$ and the equality, 
%$\bar{p}=Y^+(0;\xi_T,\phi_1,\phi_2)$. 
\[
\bar{p}=Y^+(0;\xi_T,\phi_1,\phi_2). 
\]
The equality, 
%$\underline{p}=Y^-(0;\xi_T,\phi_1,\phi_2)$ 
\[
\underline{p}=Y^-(0;\xi_T,\phi_1,\phi_2),
\]
can be seen similarly.
\end{proof}

\begin{rem}
We have that for $k\ge 0$, 
\[
 Y^\pm\left( t; k \xi_T, k\phi_1, k\phi_2\right)
\equiv k  Y^\pm\left( t; \xi_T, \phi_1, \phi_2\right)
\quad\text{for \  $t\in [0,T]$.}
\]
This positive homogeneity is seen from those 
of the drivers of BSDEs (42), namely
\begin{align*}
\bar{f}^\pm\left(t,ky,kz;k\hat{v},k\phi_1,k\phi_2\right)
=&k\bar{f}^\pm\left(t,y,z;\hat{v},\phi_1,\phi_2\right), \\
-r_D(t) \left( k\hat{v}\right)=&k \left\{ -r_D(t) \hat{v}\right\}.
\end{align*}
See Jiang (2008) for the details.
\end{rem}

\subsection{Results on XVA}

In this subsection, we assume that
\begin{equation}
\epsilon_f \vee \epsilon_r \le \epsilon 
\end{equation}
with some (small) positive constant $\epsilon\ll 1$. 
Consider the system of BSDEs
\begin{equation}
\begin{split}
-dY^{0,\pm}(t) =&f^{0,\pm}\left(t,Y^{0,\pm}(t),
Z^{0,\pm}(t),U_1^{0,\pm}(t),U_2^{0,\pm}(t);\hat{V}(t)\right)dt \\
-&Z^{0,\pm}(t)^\top dW(t) -U_1^{0,\pm}(t) dM_1(t) -U_2^{0,\pm}(t) dM_2(t), \\
&\qquad\text{for}\quad 0\le t\le \tau_1\wedge \tau_2 \wedge T, \\
%&\qquad\qquad 0\le t\le \tau_1\wedge \tau_2 \wedge T, \\
Y^{0,\pm}\left(\tau_1 \wedge \tau_2 \wedge T\right)=&H, \\
-d\hat{V}(t)=& -r_D(t) \hat{V}(t)dt -\Delta(t)^\top dW(t)
\quad\text{for}\quad 0\le t\le T, \\
\hat{V}(T)=&\xi_T
\end{split}
\end{equation}
on $(\Omega,{\mathcal F},{\mathbb P}, {\mathbb G})$, where
\begin{align*}
f^{0,\pm}\left(t,y,z,u_1,u_2;\hat{v}\right) 
:=f^{0}\left(t,y,z,u_1,u_2\right) 
+\alpha\left\{ 
r_f^0(t) \hat{v} \mp r_{col}^+(t)\hat{v}^\pm \pm r_{col}^-(t) \hat{v}^\mp
\right\}.
\end{align*}
Associated with (53), consider the reduced system of BSDEs
\begin{equation}
\begin{split}
-d\bar{Y}^{0,\pm}(t) =&\bar{f}^{0,\pm}\left(t,
\bar{Y}^{0,\pm}(t),
\bar{Z}^{0,\pm}(t); \hat{V}(t), \phi_1(t), \phi_2(t)\right)dt \\
&-\bar{Z}^{0,\pm}(t)^\top dW(t)
\quad\text{for}\quad 0\le t\le T, \\
%&\qquad\qquad 0\le t\le \tau_1\wedge \tau_2 \wedge T, \\
\bar{Y}^{0,\pm}(T)=&\xi_T, \\
-d\hat{V}(t)=& -r_D(t) \hat{V}(t)dt -\Delta(t)^\top dW(t)
\quad\text{for}\quad 0\le t\le T, \\
\hat{V}(T)=&\xi_T
\end{split}
\end{equation}
on $(\Omega,{\mathcal F},{\mathbb P}, {\mathbb F})$, where
\begin{align*}
\bar{f}^{0,\pm}\left(t,y,z;\hat{v},p_1,p_2\right) 
:=f^{0,\pm}\left(t,y,z,p_1-y,p_2-y;\hat{v}\right) 
+(p_1-y)h_1(t)+(p_2-y)h_2(t).
\end{align*}
We obtain the following.
\begin{thm}
Assume Conditions~(19) and (44).
For 
$(\bar{Y}^\pm, \bar{Z}^{\pm})$, 
$(\bar{Y}^{0,\pm}, \bar{Z}^{0,\pm})$, 
which are solutions to BSDEs (42) and (54), respectively, 
it holds that
\begin{equation}
 \bar{Y}^{-}\le \bar{Y}^{0,-}\le \bar{Y}^{0,+}\le \bar{Y}^{+} 
\end{equation}
and that
\begin{equation}
 \left\| \bar{Y}^{\pm}- \bar{Y}^{0,\pm}\right\|_{\beta,T}+
 \left\| \bar{Z}^{\pm}- \bar{Z}^{0,\pm}\right\|_{\beta,T}=O(\epsilon)
\end{equation}
as $\epsilon\to 0$
in both $+$ and $-$ cases.
\end{thm}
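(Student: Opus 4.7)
My strategy is to split the proof into two independent pieces: the ordering (55) follows from three applications of the comparison theorem for ${\mathbb F}$-BSDEs, and the $O(\epsilon)$ estimate (56) follows from a standard a priori BSDE estimate applied to the difference equations.

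\emph{Step 1 (Ordering).} I would first observe that, at every common argument $(t,y,z;\hat{v},p_1,p_2)$,
\begin{align*}
\bar{f}^{+}-\bar{f}^{0,+} &= \epsilon_f\bigl|y+(p_1-y)+(p_2-y)-\alpha\hat{v}\bigr| + \epsilon_r\bigl|\{z^\top+(p_1-y)\sigma_I+(p_2-y)\sigma_C\}\sigma^{-1}{\bf 1}\bigr|\ge 0,\\
\bar{f}^{0,-}-\bar{f}^{-} &= \epsilon_f|\cdots| + \epsilon_r|\cdots|\ge 0,
\end{align*}
and, from the definitions of $\bar{f}^{0,\pm}$ together with assumption (44),
\[
\bar{f}^{0,+}-\bar{f}^{0,-} = \alpha\bigl(r_{col}^{-}-r_{col}^{+}\bigr)|\hat{v}|\ge 0.
\]
Since all four BSDEs in (42) and (54) share the same terminal condition $\xi_T$ and the same exogenous inputs $(\hat V,\phi_1,\phi_2)$, three applications of the comparison theorem for standard ${\mathbb F}$-BSDEs (as used in the proof of Proposition~1) yield (55).

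\emph{Step 2 (Stability estimate).} I would carry out the ``$+$'' case in detail, the ``$-$'' case being symmetric. Set $\delta Y := \bar Y^{+} - \bar Y^{0,+}$ and $\delta Z := \bar Z^{+} - \bar Z^{0,+}$. These satisfy a linear ${\mathbb F}$-BSDE on $[0,T]$ with zero terminal condition and driver $G_1(t)+G_2(t)$, where
\[
G_1(t) := \bar f^{+}(t,\bar Y^{+},\bar Z^{+};\hat V,\phi_1,\phi_2) - \bar f^{+}(t,\bar Y^{0,+},\bar Z^{0,+};\hat V,\phi_1,\phi_2),
\]
\[
G_2(t) := \bar f^{+}(t,\bar Y^{0,+},\bar Z^{0,+};\hat V,\phi_1,\phi_2) - \bar f^{0,+}(t,\bar Y^{0,+},\bar Z^{0,+};\hat V,\phi_1,\phi_2).
\]
The term $G_1$ is Lipschitz in $(\delta Y,\delta Z)$ with a constant $K$ depending only on the uniform bounds of $r_j^{0},\sigma_I,\sigma_C,\sigma^{-1},h_i,\epsilon_f,\epsilon_r$, while by (52) and boundedness of $\sigma_I,\sigma_C,\sigma^{-1}$,
\[
|G_2(t)|\le C\,\epsilon\bigl(|\bar Y^{0,+}(t)|+|\bar Z^{0,+}(t)|+|\hat V(t)|+|\phi_1(t)|+|\phi_2(t)|\bigr).
\]
Applying It\^o's formula to $e^{\beta t}|\delta Y(t)|^{2}$ and choosing $\beta>0$ large enough to absorb $K$ yields the standard a priori bound
\[
\|\delta Y\|_{\beta,T}^{2}+\|\delta Z\|_{\beta,T}^{2}\le C'\,{\mathbb E}\!\left[\int_{0}^{T}e^{\beta t}|G_2(t)|^{2}\,dt\right]\le C''\epsilon^{2},
\]
where the right-hand side is finite because $\bar Y^{0,+},\hat V\in{\mathbb S}^{2}_{\beta,T}$, $\bar Z^{0,+}\in{\mathbb H}^{2,n}_{\beta,T}$, and Assumption~1(ii) holds for $\phi_1,\phi_2$. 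Taking square roots gives (56).

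\emph{Main obstacle.} Both steps are routine given the machinery of Section~2; the only delicate bookkeeping is ensuring that the Lipschitz constant $K$ of $G_1$, and hence the threshold for $\beta$, can be chosen uniformly in the small parameter $\epsilon\in (0,1]$. Since $\bar f^{+}$ depends on $\epsilon_f,\epsilon_r$ only through the two absolute-value terms whose Lipschitz constants in $(y,z)$ are controlled by (52) and fixed bounds on $\sigma_I,\sigma_C,\sigma^{-1}$, this uniformity is automatic, so the constants $C',C''$ above are independent of $\epsilon$ and (56) follows.
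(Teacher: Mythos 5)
Your proposal is correct and follows essentially the same route as the paper: the ordering (55) via the comparison theorem applied to the three nonnegative driver differences $\bar f^{+}-\bar f^{0,+}$, $\bar f^{0,+}-\bar f^{0,-}=\alpha(r_{col}^{-}-r_{col}^{+})|\hat v|$, and $\bar f^{0,-}-\bar f^{-}$, and the bound (56) via the standard a priori stability estimate for BSDE differences with the perturbation $G_2$ evaluated along $(\bar Y^{0,\pm},\bar Z^{0,\pm})$. The only cosmetic difference is that you carry out the It\^o-formula estimate on $e^{\beta t}|\delta Y|^2$ by hand (including the useful remark on uniformity of the Lipschitz constant in $\epsilon$), whereas the paper simply cites Proposition~2.4 of El Karoui, Peng, and Quenez (2000), whose proof is exactly this argument.
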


\begin{proof}[Sketch]
The relation (55) is easily seen from the comparison theorem of BSDEs.
To see (56), we can apply the continuity (and the differentiability) 
results with their proofs
with respect to parameterized BSDEs, shown in El Karoui et al.\ (2000)
(see Proposition~2.4 and its proof in \cite{EPQ} for the details).
\end{proof}
Combining Theorems~4 and 5, we see the following.
\begin{cor}
Assume Conditions~(19), (44), (48), and (49). 
Then $Y^{0,-}(0)$ and $Y^{0,+}(0)$ are arbitrage-free prices at time $0$
for the derivative security given in Definition~2. 
\end{cor}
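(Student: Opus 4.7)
The plan is to combine Theorems~4 and 5 directly: Theorem~4 guarantees that every price in the interval $\left[Y^-(0), Y^+(0)\right]$ is arbitrage-free, so it suffices to show $Y^{0,-}(0)$ and $Y^{0,+}(0)$ both lie in this interval.

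First, I would use Theorem~5 (which requires only Conditions~(19) and (44)) to obtain the chain
\[
\bar{Y}^{-}(t) \le \bar{Y}^{0,-}(t) \le \bar{Y}^{0,+}(t) \le \bar{Y}^{+}(t), \quad t\in[0,T].
\]
Second, I would invoke the reduction formula (41) of Proposition~1 to translate these inequalities on the smaller filtration $\mathbb{F}$ into inequalities for $Y^{\pm}$ and $Y^{0,\pm}$ at time $t=0$. Applying the same argument of Theorem~2 to BSDE~(53) (whose driver $f^{0,\pm}$ is Lipschitz under the standing boundedness assumptions), we also get $Y^{0,\pm}(0) = \bar{Y}^{0,\pm}(0)$. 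Since $\tau_1 \wedge \tau_2 > 0$ almost surely (the exponential variables $E_i$ are a.s.\ strictly positive and $h_i\ge 0$), evaluating (41) at $t=0$ gives $Y^{\pm}(0) = \bar{Y}^{\pm}(0)$, hence
\[
Y^{-}(0) \le Y^{0,-}(0) \le Y^{0,+}(0) \le Y^{+}(0).
\]

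Third, I would apply Theorem~4 (which additionally requires Conditions~(48) and (49), as assumed in the corollary) to conclude that $Y^{-}(0)$ and $Y^{+}(0)$ coincide with the maximal subhedging price $\underline{p}$ and the minimal superhedging price $\bar{p}$, respectively, and that every price in $[\underline{p},\bar{p}] = [Y^-(0), Y^+(0)]$ is arbitrage-free. In particular, $Y^{0,-}(0)$ and $Y^{0,+}(0)$ are arbitrage-free.

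Since the heavy lifting — the comparison in Theorem~5 and the sandwich-plus-arbitrage-free argument in Theorem~4 — is already in hand, there is no serious obstacle; the only care required is the elementary observation that the evaluation at $t=0$ picks up $\bar{Y}^{\pm}(0)$ (not the jump part) because $\{0 < \tau_1 \wedge \tau_2 \wedge T\}$ has full probability under the boundedness of the hazard rates.
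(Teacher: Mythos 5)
Your proposal is correct and follows essentially the same route as the paper, which proves the corollary simply by combining Theorems~4 and~5. The details you add — passing from $\bar{Y}^{\pm},\bar{Y}^{0,\pm}$ to $Y^{\pm}(0),Y^{0,\pm}(0)$ via the reduction formula (41) (and its analogue for BSDE (53)) together with the observation that $\tau_1\wedge\tau_2\wedge T>0$ almost surely — are exactly the steps the paper leaves implicit, and they are carried out correctly.
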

The above corollary implies that 
$Y^{0,\pm}(0)$ may be regarded as 
approximated prices of the derivative security 
for the writer and her counterparty, which 
prohibit the existence of an arbitrage opportunity.
Because BSDEs for $(Y^{0,\pm},Z^{0,\pm})$ are linear,\footnote{That is, the drivers 
$f^{0,\pm}(t,y,z,u_1,u_2;\hat{v})$ are linear with respect to $(y,z,u_1,u_2)$.}
we obtain the closed-form expressions for $Y^{0,\pm}$ as follows.
Let us introduce the probability measure $\tilde{\mathbb P}_T$
on $(\Omega, {\mathcal F}_T)$ by
\[
d\tilde{\mathbb P}_T \bigm|_{{\mathcal F}_t}
={\mathcal E}(t) d{\mathbb P}\bigm|_{{\mathcal F}_t},
\quad t\in [0,T],
\]
where
\begin{align*}
 {\mathcal E}(t)
:=\exp\biggl[
\int_0^t \left\{ r_r^0(u)-r_D(u)\right\} 
{\bf 1}^\top (\sigma(u)^{-1})^{\top} dW(u) 
-\frac{1}{2}\int_0^t
\left\{ r_r^0(u)-r_D(u)\right\}^2 \left| \sigma(u)^{-1}{\bf 1}\right|^2 du
\biggr].
\end{align*}
We denote 
the expectation with respect to $\tilde{\mathbb P}_T$ 
conditioned by ${\mathcal F}_t$ 
by $\tilde{\mathbb E}_t\left[ (\cdots) \right]
=\tilde{\mathbb E}\left[ (\cdots)| {\mathcal F}_t\right]$.
Recall that
\[
 \tilde{W}(t):=W(t)
- \int_0^t \left\{ r_r^0(u)-r_D(u)\right\} \sigma(u)^{-1}{\bf 1}du,
\quad t\in [0,T]
\]
is a $(\tilde{\mathbb P}_T,{\mathbb F})$-Brownian motion
by the Maruyama--Girsanov theorem, and 
on $\left( \Omega, {\mathcal F},\tilde{\mathbb P}_T, {\mathbb F}\right)$
the risky asset price process $S$ has the dynamics
\[
 d{S}(t)
=\text{\rm diag}(S(t))
\left\{
\sigma(t)d\tilde{W}(t)+r_r^0(t) {\bf 1}dt
\right\},
\quad S(0)\in {\mathbb R}^n_{++}.
\]
Also, we denote 
\[
{\rm DF}_r(t,u):=\exp\left\{ -\int_t^u r(s)ds\right\}
\]
for the process $r:=(r(t))_{t\in [0,T]}$.
We then obtain the following.
\begin{prop}
The following representation holds:
\begin{equation}
 \bar{Y}^{0,\pm}(t)= {\rm V}(t) 
+ {\rm VA}_1(t)+{\rm VA}_2(t)
+{\rm VA}_3(t)
+{\rm VA}_4(t)
+{\rm VA}^{\pm}_5(t).
\end{equation}
Here, 
\begin{align*}
{\rm V}(t):=&\tilde{\mathbb E}_t\left[ {\rm DF}_{r_f^0}(t,T) \xi_T\right], \\
{\rm VA}_1(t):=& \tilde{\mathbb E}_t\left[ 
\int_t^T {\rm DF}_R(t,u)h_1(u) \hat{\phi}_1(u)du 
\right], \\
%\biggm| {\mathcal F}_t\right], \\
{\rm VA}_2(t):=& \tilde{\mathbb E}_t\left[ 
\int_t^T {\rm DF}_R(t,u)h_2(u) \hat{\phi}_2(u)du 
\right], \\
{\rm VA}_3(t):=& -\tilde{\mathbb E}_t\left[ 
\int_t^T {\rm DF}_R(t,u)
\left\{(r_f^0-r_D)\left( \hat{\phi}_1+\hat{\phi}_2 \right)\right\}(u)du
\right], \\
{\rm VA}_4(t):=& \tilde{\mathbb E}_t\left[ 
\int_t^T {\rm DF}_R(t,u)
\left\{ (r_r^0-r_D)
\left( \hat{\phi}_1\sigma_I+\hat{\phi}_2\sigma_C\right)\sigma^{-1}{\bf 1}
\right\}(u)du
\right], \\
{\rm VA}_5^{\pm}(t):=&\alpha \tilde{\mathbb E}_t\left[ 
\int_t^T {\rm DF}_R(t,u)
\left\{ 
\left( r_f^0-r_{col}^\pm\right)\hat{V}^+
- \left( r_f^0-r_{col}^\mp\right)\hat{V}^-
\right\}(u)du
\right],
\end{align*}
where we define
\begin{align*}
\hat{\phi}_i:=&\phi_i -{V} \quad\text{for $i=1,2$,\quad and} \\ 
R:=&r_D-\left(r_f^0-r_D\right) 
+\left\{(r_r^0-r_D)\left(
\sigma_I+\sigma_C\right)(\sigma)^{-1}{\bf 1}\right\}
+h_1+h_2.
\end{align*}
\end{prop}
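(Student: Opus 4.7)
The plan is to exploit the fact that BSDE (54) is \emph{linear} in $(y,z)$, solve it by the classical Girsanov plus linear-BSDE formula, and then algebraically rearrange the result to produce the six labelled terms $V, {\rm VA}_1,\dots,{\rm VA}_5^{\pm}$.

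First I would collect terms in the driver $\bar{f}^{0,\pm}$. Substituting $u_i = \phi_i - y$ into (36) and adding the two hazard-rate contributions, every $y$-coefficient combines to $-R$ with $R$ the quantity defined in the proposition, while the $z$-coefficient is $(r_r^0-r_D)\sigma^{-1}\mathbf{1}$, and the remainder is an $({\mathbb F}$-adapted$)$ source
\[
G^{\pm}(t) := \phi_1\,\Gamma_I + \phi_2\,\Gamma_C + \alpha\bigl\{r_f^0\hat{V} \mp r_{col}^+\hat{V}^{\pm} \pm r_{col}^-\hat{V}^{\mp}\bigr\}(t),
\]
where $\Gamma_I := h_1 - (r_f^0-r_D) + (r_r^0-r_D)\sigma_I\sigma^{-1}\mathbf{1}$ and $\Gamma_C$ is defined analogously with $\sigma_C, h_2$. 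A short check shows the key identity $\Gamma_I+\Gamma_C = R - r_f^0 =: \Lambda$, which will be the engine of step three below.

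Second, I would apply Girsanov. Under the measure $\tilde{\mathbb P}_T$ the process $\tilde{W}$ is an ${\mathbb F}$-Brownian motion (as already recorded in the excerpt), and plugging $dW = d\tilde{W} + (r_r^0-r_D)\sigma^{-1}\mathbf{1}\,dt$ into the rewritten BSDE exactly cancels the $\bar{Z}^{0,\pm}$-term in the driver, leaving the standard linear BSDE
\[
-d\bar{Y}^{0,\pm}(t) = \bigl\{-R(t)\,\bar{Y}^{0,\pm}(t) + G^{\pm}(t)\bigr\}dt - \bar{Z}^{0,\pm}(t)^\top d\tilde{W}(t), \qquad \bar{Y}^{0,\pm}(T)=\xi_T.
\]
Its classical representation (e.g., El Karoui et al.\ 2000) gives
\[
\bar{Y}^{0,\pm}(t) = \tilde{\mathbb E}_t\!\left[ {\rm DF}_R(t,T)\,\xi_T + \int_t^T {\rm DF}_R(t,u)\,G^{\pm}(u)\,du\right].
\]

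Third, I would convert the terminal-payoff expectation into ${\rm V}(t)$ plus a running term involving $V$. Recalling $dV(t) = r_f^0(t) V(t) dt + \Delta_V(t)^\top d\tilde{W}(t)$ with $V(T)=\xi_T$, an application of It\^o's formula to $u\mapsto {\rm DF}_R(t,u) V(u)$ between $t$ and $T$, followed by taking $\tilde{\mathbb E}_t$, yields the identity
\[
\tilde{\mathbb E}_t\!\left[ {\rm DF}_R(t,T)\,\xi_T\right] = {\rm V}(t) - \tilde{\mathbb E}_t\!\left[\int_t^T {\rm DF}_R(t,u)\,\Lambda(u)\,V(u)\,du\right].
\]

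Finally, I would insert this identity into the representation and split $G^{\pm}$ according to $\phi_i = \hat{\phi}_i + V$. The $V$-contributions from $\phi_1\Gamma_I + \phi_2\Gamma_C$ combine, via $\Gamma_I+\Gamma_C=\Lambda$, precisely with the running $V\Lambda$ term above and cancel, leaving exactly $\hat\phi_1\Gamma_I + \hat\phi_2\Gamma_C$; separating this by the four coefficients $h_i$, $-(r_f^0-r_D)$, $(r_r^0-r_D)\sigma_i\sigma^{-1}\mathbf{1}$ gives ${\rm VA}_1+{\rm VA}_2$, ${\rm VA}_3$, ${\rm VA}_4$. For the last block, using $\hat{V}=\hat{V}^+-\hat{V}^-$ one rewrites $\alpha\{r_f^0\hat{V}\mp r_{col}^+\hat{V}^{\pm}\pm r_{col}^-\hat{V}^{\mp}\} = \alpha\{(r_f^0-r_{col}^{\pm})\hat{V}^+ - (r_f^0-r_{col}^{\mp})\hat{V}^-\}$, which is exactly the integrand of ${\rm VA}_5^{\pm}$. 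The main obstacle is purely bookkeeping — verifying that the $V$-terms cancel via the identity $\Gamma_I+\Gamma_C=\Lambda$ and that the $\pm$-signs in ${\rm VA}_5^{\pm}$ come out correctly in both cases; once that combinatorial check is done, the rest of the argument is essentially mechanical.
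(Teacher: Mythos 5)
Your proposal is correct and follows essentially the same route as the paper: both solve the linear BSDE (54) explicitly under $\tilde{\mathbb P}_T$ with discount rate $R$, and both hinge on the identity $\Gamma_I+\Gamma_C=R-r_f^0$ to convert $\tilde{\mathbb E}_t\bigl[{\rm DF}_R(t,T)\xi_T\bigr]$ into ${\rm V}(t)$ while absorbing the $V$-parts of $\phi_i=\hat\phi_i+V$. The only cosmetic difference is that you derive the conversion identity by It\^o's formula applied to $u\mapsto {\rm DF}_R(t,u)V(u)$, whereas the paper obtains it via the tower property and differentiating ${\rm DF}_{R-r_f^0}(t,u)$ in $u$ — the same computation in different clothing.
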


\begin{proof}
Using the representation formula for linear BSDE 
(e.g., see Proposition~2.2 of \cite{EPQ}), we see that
\[
 \bar{Y}^{0,\pm}(t)= 
\bar{\rm V}(t) 
+ \overline{\rm VA}_1(t)
+\overline{\rm VA}_2(t)
+\overline{\rm VA}_3(t)
+\overline{\rm VA}_4(t)
+{\rm VA}^{\pm}_5(t),
\]
where 
\begin{align*}
\bar{\rm V}(t):=&\tilde{\mathbb E}_t\left[ {\rm DF}_{R}(t,T) \xi_T\right], \\
\overline{\rm VA}_1(t):=& \tilde{\mathbb E}_t\left[ 
\int_t^T {\rm DF}_R(t,u)h_1(u) {\phi}_1(u)du 
\right], \\
%\biggm| {\mathcal F}_t\right], \\
\overline{\rm VA}_2(t):=& \tilde{\mathbb E}_t\left[ 
\int_t^T {\rm DF}_R(t,u)h_2(u) {\phi}_2(u)du 
\right], \\
\overline{\rm VA}_3(t):=& -\tilde{\mathbb E}_t\left[ 
\int_t^T {\rm DF}_R(t,u)
\left\{(r_f^0-r_D)\left( {\phi}_1+{\phi}_2 \right)\right\}(u)du
\right], \\
\overline{\rm VA}_4(t):=& \tilde{\mathbb E}_t\left[ 
\int_t^T {\rm DF}_R(t,u)
\left\{ (r_r^0-r_D)
\left( {\phi}_1\sigma_I+{\phi}_2\sigma_C\right)\sigma^{-1}{\bf 1}
\right\}(u)du
\right].
\end{align*}
Furthermore, we see that
\begin{align*}
&\left[
{\rm VA}_1+{\rm VA}_2+{\rm VA}_3+{\rm VA}_4
- \overline{\rm VA}_1-\overline{\rm VA}_2-\overline{\rm VA}_3
-\overline{\rm VA}_4\right](t) \\
=&-\tilde{\mathbb E}_t
\left[ \int_t^T {\rm DF}_R(t,u) {\rm V}(u) 
\left\{ R(u)-r_f^0(u)\right\} du
\right] \\
=&- \tilde{\mathbb E}_t\left[\int_t^T {\rm DF}_R(t,u) 
\tilde{\mathbb E}_u \left[ {\rm DF}_{r_f^0}(u,T)\xi_T\right]
\left\{ R(u)-r_f^0(u)\right\} du\right] \\
=&\tilde{\mathbb E}_t\left[ 
{\rm DF}_{r_f^0}(t,T)\xi_T
\int_t^T 
\frac{\partial}{\partial u} {\rm DF}_{R-r_f^0}(t,u) du\right] \\
=&\tilde{\mathbb E}_t\left[ {\rm DF}_{r_f^0}(t,T)
\left\{ {\rm DF}_{R-r_f^0}(t,T)-1\right\}\xi_T\right] \\
=&\tilde{\mathbb E}_t\left[ 
\left\{ {\rm DF}_{R}(t,T)-{\rm DF}_{r_f^0}(t,T)\right\}\xi_T\right] 
=\bar{\rm V}(t)-{\rm V}(t), 
\end{align*}
hence the proof is complete.
\end{proof}
\begin{rem}
Suppose that $r_r^0\equiv r_f^0\equiv r_D$ holds. 
In this case, $\tilde{\mathbb P}_T\equiv {\mathbb P}$
and ${\rm V}\equiv \hat{V}$ follow.
Furthermore, consider 
$\phi_i(t):=\varphi_i\left(\hat{V}(t)\right)$, 
where (17) is employed for $i=1,2$. 
Then, in (57), 
${\rm VA}_3\equiv {\rm VA}_4\equiv 0$, and
$-{\rm VA}_1$, ${\rm VA}_2$, and ${\rm VA}_5^\pm$ are called 
the debt valuation adjustment (DVA), the credit valuation adjustment (CVA), 
and the collateral valuation adjustment (ColVA), respectively, 
which are popularly used XVA terms in practice for 
the valuation adjustment in the pricing of derivative securities.
Concretely, DVA, CVA, and ColVA at time $t$ are written as
\begin{align*}
{\rm DVA}(t):=&-
{\mathbb E}_t\left[ 
\int_t^T {\rm DF}_{r_D+h_1+h_2}(t,u)h_1(u) \hat{\phi}_1(u)du 
\right], \\
{\rm CVA}(t):=& {\mathbb E}_t\left[ 
\int_t^T {\rm DF}_{r_D+h_1+h_2}(t,u)h_2(u) \hat{\phi}_2(u)du 
\right], \\
%{\rm FVA}(t):=& 
%{\mathbb E}_t \left[ \int_t^T 
%{\rm DF}_{r_D + h_1+h_2}(t,u)
%\left\{(r_f^0-r_D)\left( {\phi}_1+{\phi}_2 \right)\right\}(u)du
%\right], \\
{\rm ColVA}^\pm(t):=&
{\mathbb E}_t\Biggl[
\int_t^T {\rm DF}_{r_D+h_1+h_2}(t,u)
\left\{ 
\left( r_D-r_{col}^\pm\right)\alpha\hat{V}^+
- \left( r_D-r_{col}^\mp\right)\alpha\hat{V}^-
\right\}(u)du
\Biggr],
\end{align*}
respectively, 
where we denote ${\mathbb E}_t[(\cdots)]:={\mathbb E}\left[ (\cdots)
 |{\mathcal F}_t\right]$.
Further, 
\[
{\rm FVA}(t):=
{\mathbb E}_t \left[ \int_t^T 
{\rm DF}_{r_D + h_1+h_2}(t,u)
\left\{(r_f^0-r_D)\left( {\phi}_1+{\phi}_2 \right)\right\}(u)du
\right], 
\]
called the funding valuation adjustment (FVA) at time $t$, 
is another popularly used adjustment term in practice, 
which reflects the funding cost of uncollateralised derivatives 
above the riskfree rate of return.
%It represents the costs and benefits of writing a hedge 
%for a client who is not posting collateral, 
%and then hedging that trade with a collateralised one in the interbank market.
We can roughly relate 
these XVA terms with the correction terms in Proposition 3 as follows: 
Let $r_r^0\equiv r_D$,\footnote{In practice,
the difference $r_r^0-r_D$ seems to have been usually ignored. 
%and considered to be zero.
%this assumption looks to be always considered.
} which implies ${\rm VA}_4\equiv 0$. 
Further, suppose $r_f^0 \approx r_D$. Then, 
we may interpret as 
\begin{align*}
{\rm DVA}\approx& -{\rm VA}_1, \\
{\rm CVA}\approx& {\rm VA}_2,  \\
{\rm ColVA}^\pm \approx &{\rm VA}_5^\pm, 
%\text{and}\quad{\rm FVA}\approx& {\rm VA}_3,
\end{align*}
and 
\[
{\rm FVA}\approx {\rm VA}_3, 
\]
or 
\[
{\rm FVA} \approx
{\rm VA}_3
%+{\rm VA}_4
+\left( {\rm VA}_1+{\rm DVA}\right)
+\left( {\rm VA}_2-{\rm CVA}\right)
+\left( {\rm ColVA}^\pm-{\rm VA}_5^\pm\right).
\]
For other theoretical studies on the valuation adjustments
and related interpretation of XVA used in practice, we refer to 
Brigo et al.\ (2020) and the reference therein. 
Also, for comprehensive information on 
XVA issue and expanding related issues (e.g., computational issue), 
see for example 
Gregory (2015) and Glau et al.\ (2016), 
and the references therein, which are still nonexhaustive.
\end{rem}

\subsection{Perturbed BSDEs}

As we see in Theorem~5 and Corollary~1, under certain conditions, 
$Y^{0,+}(t) (<Y^+(t))$, which is a zeroth-order approximation 
of the minimal hedging cost $Y^{+}(t)$, 
is an arbitrage-free price for the writer at time $t$. 
In this subsection, we try to improve our hedging strategy 
by using a first-order approximation. 
Using the solution to BSDE (53), consider the linear BSDE 
\begin{equation}
\begin{split}
-dY^{1,\pm}(t)
=&
f^{0}\left( t, Y^{1,\pm}(t), Z^{1,\pm}(t), 
U^{1,\pm}_1(t), U^{1,\pm}_2(t)\right) dt \\
+&f^{1,\pm}\left( t, Y^{0,\pm}(t), Z^{0,\pm}(t), 
U^{0,\pm}_1(t), U^{0,\pm}_2(t), \hat{V}(t) \right)dt  \\
-&Z^{1,\pm}(t)dW(t)
-U^{1,\pm}_1(t)dM_1(t)-U^{1,\pm}_2(t) dM_2(t), \\
Y^{1,\pm}(\tau_1\wedge\tau_2\wedge T)=&0
\end{split}
\end{equation}
on $(\Omega,{\mathcal F},{\mathbb P},{\mathbb G})$, 
where
\begin{multline*}
f^{1,\pm}(t,y,z,u_1,u_2;\hat{v})
:=\pm \epsilon_f(t)\left| y+u_1+u_2-\alpha \hat{v}\right| \\
\pm \epsilon_r(t) 
\left|
\left\{  z^\top +u_1\sigma_I(t) +u_2\sigma_C(t)\right\}
\sigma(t)^{-1}{\bf 1}\right|. 
\end{multline*}
Furthermore, using the solution to BSDE (54), consider the linear BSDE 
\begin{equation}
\begin{split}
-d\bar{Y}^{1,\pm}(t)
=&
\bar{f}^{0}\left( t, \bar{Y}^{1,\pm}(t), \bar{Z}^{1,\pm}(t); 
\phi_1(t),\phi_2(t) \right)dt  \\
+&\bar{f}^{1,\pm}\left( t, \bar{Y}^{0,\pm}(t), \bar{Z}^{0,\pm}(t); 
\hat{V}(t), \phi_1(t),\phi_2(t) \right)dt  \\
-&\bar{Z}^{1,\pm}(t)dW(t), \\
\bar{Y}^{1,\pm}(T)=&0
\end{split}
\end{equation}
on $(\Omega,{\mathcal F},{\mathbb P},{\mathbb F})$, 
where
\begin{align*}
\bar{f}^{0}(t,y,z;p_1,p_2):=&
f^0\left( t,y,z,p_1-y,p_2-y\right), \\
\bar{f}^{1,\pm}(t,y,z;\hat{v},p_1,p_2)
:=&
\pm \epsilon_f(t) \left| y+ (p_1-y)+ (p_2-y)-\alpha \hat{v}\right| \\
&\pm \epsilon_r(t) \left|
\left\{  z^\top +(p_1-y)\sigma_I(t) +(p_2-y)\sigma_C(t)\right\}
\sigma(t)^{-1}{\bf 1}\right|. 
\end{align*}
Using a similar technique to that used in the proof of Theorem~5, 
we can show the following.
\begin{prop}
It holds that for any sufficiently large $\beta>0$, 
\begin{align*}
\| \bar{Y}^\pm-\left( \bar{Y}^{0,\pm}+\bar{Y}^{1,\pm} \right)\|_{\beta,T}
+\| \bar{Z}^\pm-\left( \bar{Z}^{0,\pm}+\bar{Z}^{1,\pm} \right)\|_{\beta,T}
=&O(\epsilon^2) 
\end{align*}
as $\epsilon\to 0$, where we assume (52).
\end{prop}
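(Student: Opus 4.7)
The plan is to derive a BSDE satisfied by the error process
$\delta\bar Y:=\bar Y^{\pm}-\bar Y^{0,\pm}-\bar Y^{1,\pm}$,
$\delta\bar Z:=\bar Z^{\pm}-\bar Z^{0,\pm}-\bar Z^{1,\pm}$,
show that its driver splits into a Lipschitz part in $(\delta\bar Y,\delta\bar Z)$ plus a source term of $\|\cdot\|_{\beta,T}$-size $O(\epsilon^{2})$, and then invoke the standard $L^{2}$ a priori estimate for Lipschitz-driver BSDEs (e.g.\ Proposition~2.1 of El Karoui et al., 2000) to conclude. The terminal condition of the error BSDE is $\delta\bar Y(T)=\xi_{T}-\xi_{T}-0=0$, which is crucial for bootstrapping the estimate back from $T$.

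Subtracting (42), (54), and (59), and using the identity $\bar f^{\pm}=\bar f^{0,\pm}+\bar f^{1,\pm}$ which follows at once from (35)--(38) and (43), the driver of $\delta\bar Y$ can be written as
\begin{align*}
&\bigl\{\bar f^{0,\pm}(\bar Y^{\pm},\bar Z^{\pm})-\bar f^{0,\pm}(\bar Y^{0,\pm},\bar Z^{0,\pm})-\bar f^{0}(\bar Y^{1,\pm},\bar Z^{1,\pm};\phi_{1},\phi_{2})\bigr\}\\
&\quad+\bigl\{\bar f^{1,\pm}(\bar Y^{\pm},\bar Z^{\pm})-\bar f^{1,\pm}(\bar Y^{0,\pm},\bar Z^{0,\pm})\bigr\}.
\end{align*}
Because $\bar f^{0,\pm}$ and $\bar f^{0}$ are both affine in $(y,z)$ with bounded progressively measurable coefficients, the first brace rearranges into the form $A(t)\,\delta\bar Y+B(t)^{\top}\delta\bar Z$ with bounded $A,B$, which is absorbed into the Lipschitz part of the driver. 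For the second brace, the definition of $\bar f^{1,\pm}$ together with the boundedness of $\sigma_{I},\sigma_{C},\sigma^{-1}$ shows that it is Lipschitz in $(y,z,p_{1},p_{2})$ with Lipschitz constant bounded by $C(\epsilon_{f}+\epsilon_{r})\le 2C\epsilon$, whence
\[
\bigl|\bar f^{1,\pm}(\bar Y^{\pm},\bar Z^{\pm})-\bar f^{1,\pm}(\bar Y^{0,\pm},\bar Z^{0,\pm})\bigr|\le C\epsilon\bigl(|\bar Y^{\pm}-\bar Y^{0,\pm}|+|\bar Z^{\pm}-\bar Z^{0,\pm}|\bigr).
\]
Combining this pointwise estimate with the $O(\epsilon)$ bound supplied by Theorem~5 yields the source control $\|\,\cdot\,\|_{\beta,T}=O(\epsilon)\cdot O(\epsilon)=O(\epsilon^{2})$, so applying the a priori estimate to the BSDE for $(\delta\bar Y,\delta\bar Z)$ gives the desired $\|\delta\bar Y\|_{\beta,T}+\|\delta\bar Z\|_{\beta,T}=O(\epsilon^{2})$.

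The main obstacle is the first brace: one must verify that $\bar Y^{1,\pm}$ is genuinely the first-order linear response of $\bar Y^{\pm}$ to switching on the $\bar f^{1,\pm}$ perturbation, i.e., that after the cancellations the residue is truly linear in $(\delta\bar Y,\delta\bar Z)$ with bounded coefficients and no $O(\epsilon)$ (or worse, $O(1)$) surplus from mismatched $(y,z)$-linearizations of $\bar f^{0,\pm}$ and $\bar f^{0}$. The cleanest way to handle this, which is what the author's hint ``a similar technique to that used in the proof of Theorem~5'' refers to, is to introduce the one-parameter family $\bar f^{\pm}_{s}:=\bar f^{0,\pm}+s\bar f^{1,\pm}$ for $s\in[0,1]$, apply the parameter-differentiability result of El Karoui et al.\ (2000, Proposition~2.4) to obtain a $C^{1}$ dependence of the BSDE solution on $s$, identify $\bar Y^{1,\pm}$ with $\partial_{s}\bar Y^{\pm}_{s}|_{s=0}$, and then perform a quantitative Taylor expansion about $s=0$ carried one order higher than the $O(\epsilon)$ argument used for Theorem~5.
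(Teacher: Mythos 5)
Your proposal matches the paper's proof in substance: the paper disposes of Proposition~4 in a single line by invoking ``the same technique as Theorem~5,'' i.e.\ the a priori estimates and parameter-differentiability for BSDEs of El Karoui et al.\ (2000, Propositions~2.1 and 2.4), which is exactly your error-BSDE argument --- zero terminal value, exact cancellation of the affine $\bar{f}^{0,\pm}$-part, and a source term bounded pointwise by $C\epsilon\left(|\bar{Y}^{\pm}-\bar{Y}^{0,\pm}|+|\bar{Z}^{\pm}-\bar{Z}^{0,\pm}|\right)$, hence of $\|\cdot\|_{\beta,T}$-size $O(\epsilon)\cdot O(\epsilon)=O(\epsilon^{2})$ by Theorem~5. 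Your closing caveat about verifying that $\bar{Y}^{1,\pm}$ is the genuine first-order response --- equivalently, reading $\bar{f}^{0}$ in (59) as the $(y,z)$-linearization of $\bar{f}^{0,\pm}$ (hazard terms $-(h_{1}+h_{2})y$ included, $p_{1},p_{2}$ entering only through the frozen $\bar{f}^{1,\pm}$ source) --- is precisely the point the paper leaves implicit, and your one-parameter-family resolution via Proposition~2.4 of El Karoui et al.\ is the intended one.
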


\section*{Acknowledgements}
The authors are grateful to an anonymous referee 
for valuable comments and helpful suggestions.
%
% ---- Bibliography ----
%

\end{document}